\tikzstyle{NN}=[>=latex, every node/.style={scale=0.9}, every label/.style={scale=0.8}]
\tikzstyle{neuron}=[circle, fill=black!40, minimum size=10pt, inner sep=2pt]
\tikzstyle{circ}=[circle, fill=white, draw=black, minimum size=12pt, inner sep=2pt]
\tikzstyle{weight}=[circle, fill=white, inner sep=1pt]
\tikzstyle{edge}=[->, shorten <=2pt, shorten >=2pt, draw=black]
\newcounter{theorem}
\newtheorem{thm}[theorem]{Theorem}
\newtheorem*{definition}{Definition}
\renewcommand{\t}[1]{#1'} 
\newcommand{\E}{\operatorname{\mathbb{E}}}
\newcommand{\var}{\operatorname{Var}}
\newcommand{\Span}{\operatorname{span}}
\newcommand{\diag}{\operatorname{diag}}
\newcommand{\normal}{\mathcal{N}}
\newcommand{\argmin}{\operatorname*{arg\,min}}
\newcommand{\nnopg}{\texttt{NN\_OPG}}
\newcommand{\nnwrap}{\texttt{NN\_wrap}}
\newcommand{\nn}{\textsc{nn}}
\newcommand{\opg}{\textsc{opg}}
\newcommand{\mave}{\textsc{mave}}
\newcommand{\cve}{\textsc{cve}}
\newcommand{\ecve}{\textsc{ecve}}
\newcommand{\sir}{\textsc{sir}}
\newcommand{\pir}{\textsc{pir}}
\newcommand{\pfc}{\textsc{pfc}}
\newcommand{\contr}{\textsc{cr}}
\newcommand{\dr}{\textsc{dr}}
\newcommand{\save}{\textsc{save}}
\newcommand{\sdr}{\textsc{sdr}}
\newcommand{\phd}{\mbox{p{\textsc{h}d}}}
\newcommand{\gopg}{g_{\nnopg}}
\newcommand{\gwrap}{g_{\nnwrap}}
\newcommand{\gnn}{g_{\nn}}
\newcommand{\Loss}{\mathcal{L}}
\renewcommand{\epsilon}{\varepsilon}
\newcommand{\spc}{\mathcal{S}} 
\newcommand{\real}{\mathbb{R}} 
\newcommand{\1}{\mathbf{1}}
\newcommand{\Z}{{\mathbf Z}}
\newcommand{\X}{{\mathbf X}}
\newcommand{\I}{{\mathbf I}}
\newcommand{\Ub}{{\mathbf U}}
\newcommand{\x}{{\mathbf x}}
\newcommand{\bb}{{\mathbf b}}
\newcommand{\W}{{\mathbf W}}
\newcommand{\0}{{\mathbf 0}}
\newcommand{\B}{{\mathbf B}}
\newcommand{\V}{{\mathbf V}}
\newcommand{\eb}{\mathbf{e}}
\newcommand{\Pbf}{\mathbf{P}}
\newcommand{\vb}{\mathbf v}
\def\xn{{\mathbf x}}
\newcommand{\Pb}{\mathbb{ P}}
\newcommand{\greekbold}[1]{\mbox{\boldmath $#1$}}
\newcommand{\Sigmabf}{\greekbold{\Sigma}}
\newcommand{\Sigmaxbf}{\Sigmabf_{\x}}
\newcommand{\Thetabf}{\greekbold{\Theta}}
\newcommand{\ms}{\mathcal{S}_{\E\left(Y\mid\X\right)}} 
\title{Fusing Sufficient Dimension Reduction with  Neural Networks}
\author{\textbf{Daniel Kapla} \thanks{daniel.kapla@tuwien.ac.at} \\
	\hspace{.2cm}
	Institute of Statistics and Mathematical Methods in Economics\\ Faculty of Mathematics and Geoinformation\\ TU Wien, Vienna, Austria
	\And \\
	\textbf{Lukas Fertl}\thanks{lukas.fertl@tuwien.ac.at}\\
    \hspace{.2cm}
	Institute of Statistics and Mathematical Methods in Economics\\ Faculty of Mathematics and Geoinformation\\ TU Wien, Vienna, Austria
	\And \\
	\textbf{Efstathia Bura} \thanks{efstathia.bura@tuwien.ac.at} \\
	\hspace{.2cm}
	Institute of Statistics and Mathematical Methods in Economics\\ Faculty of Mathematics and Geoinformation\\ TU Wien, Vienna, Austria
	}
\begin{document}
\maketitle

\begin{abstract}
    We consider the regression problem where the dependence of the response $Y$
    on a set of predictors $\X$ is fully captured by the regression function
    $\E(Y \mid \X)=g(\B'\X)$, for an unknown function $g$ and low rank parameter
    $\B$ matrix. We combine neural networks with sufficient dimension reduction
    in order to remove the limitation of small $p$ and $n$ of the latter. We show
    in simulations that the proposed estimator is on par with competing sufficient
    dimension reduction methods in small $p$ and $n$ settings, such as
    \textit{minimum average variance estimation} and \textit{conditional variance
    estimation}. Among those, it is the only computationally applicable in large $p$ and $n$ problems. 
    \keywords{%
        Regression \and
        Nonparametric \and
        Mean subspace \and
        Large sample size \and
        Prediction
    }
\end{abstract}

\section{Introduction}
	
In this paper we focus on the regression problem where the dependence of the response $Y$ on a set of predictors $\X$ is fully captured by the regression function $\E(Y \mid \X)$. Moreover, we further assume there exists a linear projection, or, \textit{reduction} of the predictors that encapsulates all the modeling information in $\X$ about $\E(Y \mid \X)$. 

Specifically, we assume the conditional distribution of $Y$ given a $p$-variate random vector $\X$ satisfies the regression model
\begin{equation}\label{mod:mean_basic}
    Y = g(\t{\B} \X) + \epsilon,
\end{equation}
where $\B \in \real^{p \times k}$ of rank $k < p$, $\epsilon  \in \real$ is a random variable with $\E(\epsilon \mid \X) = 0$ and  $\var(\epsilon)= \E\left(\epsilon^2\right)=\eta^2 < \infty$, and $g$ is an unknown continuously differentiable non-constant function. The projection $\t{\B}\X$ is a \textit{linear sufficient dimension reduction} since $\E(Y\mid \X)=\E(Y \mid \t{\B}\X)$. 
Ideally, $k =1,2$, or 3, in which case modeling $Y$ as a function of $\X$ is substantially simplified. 
Our goal is to find and estimate the linear projection $\t{\B}\X$ of $\X$ as accurately as possible.

The first method targeting the linear sufficient reduction in the general regression model $F(Y \mid \X)=F(Y \mid \t{\B}\X)$, where $F$ signifies the conditional cumulative distribution function of $Y$ given the conditioning argument,  was \textit{sliced inverse regression} (\sir, \cite{Li1991}). $\sir$, as well as most \textit{sufficient dimension reduction} ($\sdr$) methods, is based on the \textit{inverse regression} of $\X$ on the response  $ Y$. These include \textit{sliced average variance estimation} (\save, \cite{Cook2000}), \textit{parametric inverse regression} (\pir, \cite{BuraCook2001}), \textit{principal fitted components} (\pfc, \cite{CookForzani2008}), \textit{directional regression} (\dr, \cite{directionalRegression}), and \textit{contour regression} (\contr, \cite{contourRegression}). Further, there are model (likelihood) based sufficient dimension reduction methods, which require knowledge of the conditional or joint distribution and are researched in \cite{Cook2007,CookForzani2009,BuraForzani2015,BuraDuarteForzani2016}.
A recent overview of $\sdr$  methods can be found in \cite{Yin2010,MaZhu2013,Li2018}.

These methods require varying assumptions on either the joint distribution of $\t{(Y,\t{\X})}$, or the conditional distribution of $\X \mid Y$, limiting their applicability. A different approach focuses on the \textit{forward} regression of $Y$ on $\X$ in order to extract the reduction. The first such method, \textit{principal Hessian directions} ($\phd$), was introduced by \cite{Li1992} and was further developed by \cite{Cook1998b} and \cite{CookLi2002,CookLi2004}. \textit{Minimum average variance estimation} ($\mave$) was introduced by \cite{Xiaetal2002} and was generalized in \cite{Xia2007,WangXia2008}. \textit{Conditional variance estimation} ($\cve$, \cite{FertlBura2021a}) is the most recent addition to the forward regression  $\sdr$ methodology. These estimators require minimal assumptions on the smoothness of the joint distribution and frequently enjoy better estimation accuracy but at the expense of higher computational cost. Among those, the most prominent so far has been the \textit{minimum average variance estimation} (\mave) \cite{Xiaetal2002}. The recently developed \textit{conditional variance estimation} (\cve, \cite{FertlBura2021a}) and its generalization \textit{ensemble conditional variance estimation} (\ecve, \cite{FertlBura2021b}) has been shown to be the only approach among forward regression based $\sdr$ methods that exhibits on par or better performance than $\mave$.

All forward regression $\sdr$ methods, $\phd, \mave$ and $\cve$,  are usable in relatively small $p$  and $n$ regression problems. When both $p$ and $n$ increase substantially, their computation can spread over days or weeks, thus rendering them  infeasible in practice. Nowadays, many data applications easily exceed these thresholds. 

This paper combines forward regression $\sdr$ with neural networks, which excel in handling huge amounts of data, in order to remove the limitation of small $p$ and $n$. 
We propose a two stage $\nn-\sdr$ estimator that carries out simultaneous sufficient dimension reduction and neural network learning. 

First we fit an arbitrary neural net to the data, and in the second stage we refine the estimate with a specific architecture using a bottleneck. The premise of the two stage $\nn-\sdr$ estimator is conceptually similar to $\mave$ with the difference that we use neural nets as universal function approximators compared to nonparametric local linear smoothing methods. The advantage of this approach is that it retains the accuracy of  state of the art $\sdr$ methods while it can be easily deployed to large scale datasets frequently encountered in applications. It also  obtains predictions  at nearly no additional computational cost compared to fully nonparametric methods used in $\mave$ and $\cve$. 
Further, the extension of the proposed $\nn-\sdr$ estimator to online learning, where new data are dynamically added, is straightforward.

The paper is organized as follows. In Section~\ref{sec:meanSubspace} we give a short overview of the theoretical foundations of SDR, and in Section~\ref{sec:neural_networks_MLP} we present neural nets and the notation used throughout. In Section~\ref{sec:NN_estimator} we propose the novel two stage estimator and in Section~\ref{sec:algo} describe the algorithm. Then in Section~\ref{sec:MAVEandOPG} we draw the analogy to existing SDR methods and demonstrate its performance in Sections~\ref{sec:sim}, \ref{sec:bigData}, \ref{sec:DataAnalysis} via simulations and data examples. Our concluding remarks are in  Section~\ref{sec:discussion}.

\section{Mean Subspace}\label{sec:meanSubspace}
Let  $(\Omega ,{\mathcal {F}},\Pb)$ be a probability space, $Y$ a univariate continuous response and $\X$ a $p$-variate continuous predictor,  jointly distributed, with $\t{(Y,\t{\X})}:\Omega \to \real^{p+1}$.

The interest of this paper is to estimate the \textit{mean subspace} denoted by $\ms$ (see \cite{CookLi2002,CookLi2004,Li2018}).

\begin{definition}
A linear space $\ms \subseteq \real^p$ is called a \textit{mean subspace} if for any basis $\B \in \real^{p \times k}$ of $\ms$ with $\dim(\ms) = k \leq p$,
\begin{align}\label{meanSDR}
\E\left(Y \mid \X\right)&= \E(Y \mid \t{\B} \X)    
\end{align}
\end{definition}

An equivalent characterisation of $\ms$ is given by
\begin{align}\label{meanSDRprojection}
\E\left(Y \mid \X\right)&= E(Y \mid \Pbf_{\ms} \X)    
\end{align}
where $\Pbf_\spc$ is the orthogonal projection on the space $\spc$ with respect to the usual inner product. 
This can be seen by letting $\B \in \real^{p \times k}$ be a basis of $\ms$, then $\Pbf_{\ms}\X = \B (\t{\B} \B)^{-1} \t{\B}\X$. The only random element is $\t{\B}\X$ and therefore the sigma algebras generated by $\t{\B}\X$ and $\Pbf_{\ms}\X$ are the same. This yields the equivalence of \eqref{meanSDR} and \eqref{meanSDRprojection}.

For a predictor vector $\X$ whose density is supported on a  convex set, with positive definite variance-covariance matrix, $\var(\X)=\Sigmaxbf$,  the  \textit{mean subspace} model \eqref{meanSDR} is equivalent to the regression model  \eqref{mod:mean_basic} in the Introduction. 
Their equivalence derives from 
\begin{align}\label{mean_basic_proof1}
    \E\left(Y \mid \X \right) &= \E\left(g(\t{\B} \X) \mid \X \right) + \E\left(\epsilon \mid \X \right) \nonumber \\
                              &= g(\t{\B} \X) = \E\left(Y \mid \t{\B} \X \right).
\end{align}
so that $\Span\{\B\} = \ms$, where span is the column space of $\B$.

The \textit{mean subspace} $\ms$ captures all the information in $\X$ about $Y$ that is contained in the first conditional moment $\E(Y \mid \X)$. That is,  if we are only interested in the conditional mean, $\X$ can be replaced by $\t{\B} \X \in \real^k$ without any loss of information. When $k$ is significantly smaller than $p$, this results in substantial savings in reducing the complexity of the modeling problem.

Our focus in this paper is the estimation of the mean subspace. By the equivalence of \eqref{meanSDR} and \eqref{meanSDRprojection} the sufficient reduction depends only on the subspace $\ms$ and not on a particular basis. Therefore, without loss of generality, we let  $\B \in \spc(p, k)$, where
\begin{equation}\label{Smanifold}
    \spc(p, k) = \{\V \in \real^{p \times k}: \t{\V}\V = \I_k\},
\end{equation}
denotes the Stiefel manifold, that comprises of all $p \times k$ matrices with orthonormal columns.

\section{The Multi Layer Perceptron (MLP)}\label{sec:neural_networks_MLP}

In this section we briefly review the concept of a \emph{Multi Layer Perceptron} (\texttt{MLP} \cite{NN1,NN2,Goodfellow-et-al-2016,ISL}) and introduce the corresponding notation.

An \texttt{MLP} is the concatenation of layers. Each layer consists of simple functions $f^{(l)}(x) = \phi(\W^{(l)} \xn + \bb^{(l)})$, where $\W^{(l)}$ is a matrix of \textit{weights}, $\bb^{(l)}$ is the \textit{bias} vector of layer $l$, and together they form an affine transformation, on which the \textit{activation} function  $\phi(\cdot)$ is  applied component-wise. The formal definition is provided next. 

\begin{definition}\label{defn:MLP}
	A \emph{Multi Layer Perceptron} (\texttt{MLP}) with $N$ layers from $\real^p \to \real$ is a function with the following structure
	\begin{align}\label{MLP}
	     f_{\text{MLP}_N}(\xn; \Thetabf) &= f^{(N)}\circ f^{(N-1)} \circ ... \circ f^{(1)} (\xn)
\end{align}
	where $\Thetabf = (\W_1,\bb_1, \ldots, \W_N,\bb_N)$ and the $l$-th layer is given by
	\begin{displaymath}\label{layer}
		f^{(l)}(\xn; \W^{(l)}, \bb^{(l)}) = \phi^{(l)}( \W^{(l)}\xn + \bb^{(l)})
	\end{displaymath}
with weights $\W^{(l)}\in\mathbb{R}^{n_{l}\times n_{l-1}}$, bias $\bb^{(l)}\in\mathbb{R}^{n_l}$, and a non-constant, continuous 
activation function $\phi^{(l)}:\real \to \real$ that is applied component-wise.
\end{definition}

The notation $\Thetabf = (\W_1,\bb_1, \ldots, \W_N,\bb_N)$ means  that all parameters of an \texttt{MLP} are collected in vectorised form into the vector $\Thetabf= (\text{vec}(\W_1), \bb_1,\ldots, \text{vec}(\W_N), \bb_N) \in \real^{\sum_{j=1}^N n_{l-1}n_l +n_l}$, where the operation $\text{vec}:\real^{n_{l-1} \times n_l} \to \real^{n_{l-1}n_l}$ stacks the columns of a matrix one after another. Note that in general an \texttt{MLP} does allow for multi-dimensional output, but for the sake of simplicity we only allow univariate responses and therefore only univariate outputs.

The first layer that receives the input $\xn$ is called the \emph{input layer}, and the last layer is the  \emph{output layer}. All other layers are called \emph{hidden layers}. A widely used activation function is the so called \texttt{ReLU} (Rectified Linear Unit) given by
\begin{equation*}
    \phi_{\texttt{ReLU}}(x) = \max(0, x).
\end{equation*}
The \texttt{ReLu} activation function will be used throughout this paper. Other popular choices include sigmoid functions like the tangens-hyperbolicus.

Figure~\ref{fig:architecture_MLP} depicts a $3$ layer \texttt{MLP}, $f_{\text{MLP}_3}(\xn; \Thetabf)$, with input dimension $4$; i.e., $\xn = \t{(x_1,\ldots, x_4)} \in \real^4$. The first layer $f^{(1)}$ has output dimension $6$, or $6$ so called \textit{neurons}, $\W_1 \in \real^{6 \times 4}, \bb_1 \in \real^6$.  The second layer, $f^{(2)}$, has $4$ neurons with $\W_2 \in \real^{4\times 6}, \bb_2 \in \real^{4}$, and the output layer, $f^{(3)}$, has $1$ neuron  with $\W_3 \in \real^{1\times 4}, \bb_2 \in \real$. The arrows represent the weights of the layer. At each node (neuron), the bias is added before the activation function $\phi^{(l)}$ is applied.

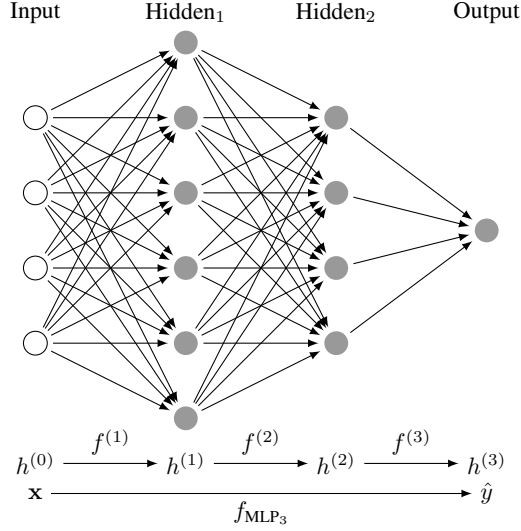
\begin{figure}[htbp!]
	\centering
	\begin{tikzpicture}[NN]
		\foreach \x in {1,...,4} {
			\node[neuron, fill=white, draw] (I-\x) at (1, -\x) {};
		}
		\foreach \x in {1,...,6} {
			\node[neuron] (H1-\x) at (3, {1-\x}) {};
		}
		\foreach \x in {1,2,3,4} {
			\node[neuron] (H2-\x) at (5, -\x) {};
		}
		\node[neuron] (O) at (7, -2.5) {};

		\foreach \from in {1,...,4} {
			\foreach \to in {1,...,6} {
				\draw[edge] (I-\from) -- (H1-\to);
			}
		}
		\foreach \from in {1,...,6} {
			\foreach \to in {1,...,4} {
				\draw[edge] (H1-\from) -- (H2-\to);
			}
		}
		\foreach \from in {1,...,4} {
			\draw[edge] (H2-\from) -- (O);
		}
		\node at ( 1, .4) {Input};
		\node at ( 3, .4) {Hidden${}_1$};
		\node at ( 5, .4) {Hidden${}_2$};
		\node at ( 7, .4) {Output};
		\node[anchor=south] at (2, -5.6) {$f^{(1)}$};
		\node[anchor=south] at (4, -5.6) {$f^{(2)}$};
		\node[anchor=south] at (6, -5.6) {$f^{(3)}$};
		\node (h_0) at (1, -5.6) {$h^{(0)}$};
		\node (h_1) at (3, -5.6) {$h^{(1)}$};
		\node (h_2) at (5, -5.6) {$h^{(2)}$};
		\node (h_3) at (7, -5.6) {$h^{(3)}$};
		\foreach \i/\j in {0/1,1/2,2/3} {
			\draw[->] (h_\i) -- (h_\j);
		}
		\node (in) at (1, -6) {$\xn$};
		\node (out) at (7, -6) {$\hat{y}$};
		\draw[->] (in) -- (out) node[anchor=north, pos=0.5] {$f_{\text{MLP}_3}$};
	\end{tikzpicture}
	\caption[MLP Architecture]{\label{fig:architecture_MLP}Example Architecture of a 3 Layer MLP.}
\end{figure}

The universal approximator theorem \cite[Thm  3]{HORNIK1991251} established that \emph{Multi Layer Perceptrons} (\texttt{MLP}s)   are universal approximators of functions. 
Theorem \ref{thm_universal_approximator}, which asserts that any continuously differentiable function can be approximated arbitrarily close on compact sets by an  \texttt{MLP}, reproduces it.
    
\begin{thm}\label{thm_universal_approximator}
    Let $\texttt{MLP}_\infty$ be the set of all one layer \texttt{MLP}'s with arbitrarily many neurons in the first layer and the activation function $\phi$ is non-constant and bounded, then $\texttt{MLP}_\infty$ is uniformly $m$ dense in $C^m(\real^p)$ on compact sets, where $C^m(\real^p)$ is the space of all $m$-times differentiable functions on $\real^p$.
\end{thm}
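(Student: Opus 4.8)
This statement is Hornik's \cite[Thm~3]{HORNIK1991251}, so the quickest route is simply to invoke it; I will instead sketch a self-contained argument. Write $\Sigma(\phi)$ for the linear span of the ridge functions $x\mapsto\phi(\t{a}x+b)$, $a\in\real^p$, $b\in\real$; up to the irrelevant output bias this is $\texttt{MLP}_\infty$. We must show that for every compact $K\subset\real^p$, every $f\in C^m(\real^p)$ and every $\epsilon>0$ there is $g\in\Sigma(\phi)$ with $\sup_{x\in K}|\partial^\alpha f(x)-\partial^\alpha g(x)|<\epsilon$ for all multi-indices $|\alpha|\le m$, i.e.\ $\Sigma(\phi)$ is dense in $C^m(K)$ for the norm $\|h\|_{C^m(K)}=\max_{|\alpha|\le m}\sup_{K}|\partial^\alpha h|$. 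Since a bounded non-constant function is never a polynomial, the hypotheses force $\phi$ to be non-polynomial, and that is the property actually used below.

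The plan has three steps. (1) Polynomials in $x$ are $C^m(K)$-dense in $C^m(K)$ (classical; e.g.\ enlarge $K$ to a cube and use tensor-product Bernstein operators, which reproduce $f$ together with all partials up to order $m$), and every polynomial is a finite linear combination of ridge monomials $x\mapsto(\t{w}x)^k$ because, by polarization, $\{(\t{w}x)^k:w\in\real^p\}$ spans the homogeneous polynomials of degree $k$. So it suffices to $C^m(K)$-approximate each ridge monomial $x\mapsto(\t{w}x)^k$, $k\le m$, by elements of $\Sigma(\phi)$. (2) Fix $w$ and put $I=\{\t{w}x:x\in K\}$, a compact interval. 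Since $\partial^\alpha\big[\phi(\lambda\t{w}x+\theta)\big]=w^{\alpha}\lambda^{|\alpha|}\phi^{(|\alpha|)}(\lambda\t{w}x+\theta)$, a $C^m(I)$-approximation of the univariate monomial $t\mapsto t^k$ by a finite sum $\sum_j\beta_j\phi(\lambda_j t+\theta_j)$ lifts, via the chain rule, to a $C^m(K)$-approximation of $x\mapsto(\t{w}x)^k$ by $\sum_j\beta_j\phi(\lambda_j\t{w}x+\theta_j)\in\Sigma(\phi)$, the error multiplied only by fixed powers of $\|w\|$ and $\max_j|\lambda_j|$. (3) It remains to place each univariate monomial $t^k$, $k\le m$, in the $C^m(I)$-closure of $\Sigma(\phi)|_I$: iterated finite differences in the parameter $\lambda$ of $t\mapsto\phi(\lambda t+\theta)$ stay in $\Sigma(\phi)|_I$ and in the limit yield $\partial_\lambda^{k}\phi(\lambda t+\theta)\big|_{\lambda=0}=t^{k}\phi^{(k)}(\theta)$; since $\phi$ is non-polynomial there is $\theta_k$ with $\phi^{(k)}(\theta_k)\ne 0$, so $t^k$ lies in that closure. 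Chaining (3)$\to$(2)$\to$(1) proves the theorem.

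The one delicate point is step (3): differentiating $\phi(\lambda t+\theta)$ in $\lambda$ within the $C^m$ norm is not justified by $\phi\in C^m$ alone. The standard remedy is to mollify — replace $\phi$ by $\phi_\delta=\phi*\rho_\delta$ with $\rho_\delta$ a smooth bump — and run step (3) for the $C^\infty$ function $\phi_\delta$. This is legitimate because (i) each $x\mapsto\phi_\delta(\t{a}x+b)$ is a $C^m$-limit on $K$ of Riemann sums $\sum_i\rho_\delta(s_i)\,\Delta s_i\,\phi(\t{a}x+b-s_i)\in\Sigma(\phi)$ — the sums and their first $m$ $x$-derivatives involve only $\phi',\dots,\phi^{(m)}$ and hence converge uniformly on $K$ — so the $C^m$-closure of $\Sigma(\phi)$ contains all $\phi_\delta$-ridge functions; and (ii) $\phi_\delta\to\phi$ in $C^m$ on compacta while $\|\phi_\delta\|_\infty\le\|\phi\|_\infty$ and $\phi_\delta$ is non-constant for small $\delta$, hence $\phi_\delta$ is a bounded, non-polynomial, $C^\infty$ activation to which step (3) applies verbatim. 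Verifying these uniform estimates is the main, though entirely routine, obstacle; the rest is the multivariate Weierstrass theorem plus the chain rule.
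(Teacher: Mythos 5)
The paper offers no proof of this statement: Theorem~\ref{thm_universal_approximator} is imported verbatim from Hornik \cite[Thm~3]{HORNIK1991251} and simply cited, so there is no in-paper argument to measure yours against. Your sketch is nonetheless a correct, standard, self-contained route, but it is not the one behind the cited result: Hornik's own proof is functional-analytic, bootstrapping $C^m$ density from $L^p$ and uniform density via convolution/duality arguments, whereas you follow the later Leshno--Lin--Pinkus--Schocken line --- reduce $C^m(K)$-approximation to polynomials via simultaneous (Bernstein) approximation of all partials, reduce polynomials to ridge monomials $(\t{w}x)^k$ by polarization, and manufacture $t^k$ from iterated divided differences of $\phi(\lambda t+\theta)$ in the parameter $\lambda$, mollifying $\phi$ first so that the $\lambda$-differentiation is legitimate in the $C^m$ topology. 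The two points you flag as delicate (the $C^m$ convergence of the Riemann sums that place the $\phi_\delta$-ridge functions in the closure of $\Sigma(\phi)$, and the existence of $\theta_k$ with $\phi_\delta^{(k)}(\theta_k)\neq 0$ because a bounded non-constant function cannot be a polynomial) are exactly the right ones and are routine to complete. One hypothesis is worth making explicit: as printed, the theorem only assumes $\phi$ non-constant and bounded, but Hornik's Theorem~3 --- and your argument, which convolves $\phi$ and differentiates the Riemann sums $m$ times --- requires $\phi\in C^m(\real)$; without that, the $C^m$ distance from an \texttt{MLP} to $f$ is not even defined for $m\ge 1$, so you are tacitly (and correctly) supplying a condition the paper's statement omits.
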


An application of Theorem~\ref{thm_universal_approximator} with $m=1$, yields that for $g(\t{\B}\xn) \in C^1(\real^p)$ of model~\eqref{mod:mean_basic}, for every arbitrary compact set $K \subset \real^p$ and for all $\nu > 0$, there exists a one layer \texttt{MLP} $f_{\text{MLP}_1}(\cdot; \Thetabf)$ such that
\begin{equation*}
    \sup_{\xn \in K}\left(|g(\t{\B}\xn)-f_{\text{MLP}_1}(\xn; \Thetabf)| + \|\nabla_\xn g(\t{\B}\xn)- \nabla_\xn f_{\text{MLP}_1}(\xn; \Thetabf)\| \right) \leq \nu
\end{equation*}
Therefore, the conditional expectation $\E(Y \mid \X = \xn) = g(\t{\B}\xn)$ and its gradients can be approximated arbitrarily close on compact sets by a one layer \texttt{MLP}. This serves as the basis for the proposed estimation procedure in Section~\ref{sec:NN_estimator}.

\section{Neural Net SDR}\label{sec:NN_estimator}
Theorems~\ref{thm_OPG} and \ref{MAVE_thm} present two ways of identifying $\B$ in model \eqref{mod:mean_basic} at the population level. They serve as the motivation for the proposed estimators. 

\begin{thm}\label{thm_OPG}
    Assume model~\eqref{mod:mean_basic} holds. Let
    \begin{align}
        \Sigmabf_\nabla = \E\left(\nabla \tilde{g}(\X)\nabla \t{\tilde{g}(\X)}\right) = \E\left(\nabla g(\t{\B}\X)\nabla \t{g(\t{\B}\X)}\right) \label{gradient_matrix}
    \end{align}
    where $\tilde{g}(\xn) = \E(Y\mid \X = \xn) = \E(Y\mid \t{\B}\xn) = g(\t{\B}\xn)$. Then, $\Span\{\Sigmabf_\nabla\} = \Span\{\B\}$.
\end{thm}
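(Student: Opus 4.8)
The plan is to exploit the composite structure $\tilde g(\xn)=g(\t{\B}\xn)$ via the chain rule, which factors $\Sigmabf_\nabla$ through $\B$, reducing the claim to the nonsingularity of a $k\times k$ Gram-type matrix, and then to obtain that nonsingularity from the minimality of $\ms$ together with the convex-support hypothesis on $\X$. Concretely, I would first introduce $h(u):=\nabla g(u)\in\real^k$, the gradient of $g$ with respect to its $k$-dimensional argument, and apply the chain rule to $\tilde g(\xn)=g(\t{\B}\xn)$ to get $\nabla\tilde g(\xn)=\B\,h(\t{\B}\xn)$ for all $\xn\in\real^p$ (valid since $g\in C^1$). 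Substituting into \eqref{gradient_matrix} gives
\[
  \Sigmabf_\nabla=\E\bigl(\B\,h(\t{\B}\X)\,\t{h(\t{\B}\X)}\,\t{\B}\bigr)=\B\,\M\,\t{\B},\qquad \M:=\E\bigl(h(\t{\B}\X)\,\t{h(\t{\B}\X)}\bigr)\in\real^{k\times k},
\]
where $\M$ is symmetric positive semidefinite as an average of outer products. This immediately yields $\Span\{\Sigmabf_\nabla\}=\col(\B\M\t{\B})\subseteq\col(\B)=\Span\{\B\}$, so only the reverse inclusion remains; since $\B$ has full column rank, $\rank(\Sigmabf_\nabla)=\rank(\M)$, and thus it suffices to show $\M\succ0$.

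To establish $\M\succ0$ I would argue by contradiction. If $\M\vb=0$ for some $\vb\neq0$, then $0=\t{\vb}\M\vb=\E\bigl((\t{\vb}h(\t{\B}\X))^2\bigr)$ forces $\t{\vb}h(\t{\B}\X)=0$ almost surely; because $\B$ has full column rank, the distribution of $\t{\B}\X$ has a density supported on a convex set $\mathcal{C}\subseteq\real^k$ with nonempty interior, and since $h=\nabla g$ is continuous this upgrades to $\t{\vb}\,\nabla g(u)=0$ for every $u\in\mathcal{C}$. Hence $g$ is constant in the direction $\vb$ on the convex set $\mathcal{C}$, so choosing $\Gammabf\in\real^{k\times(k-1)}$ with orthonormal columns spanning $\vb^{\perp}$ one gets $g(u)=g_0(\t{\Gammabf}u)$ on $\mathcal{C}$ for some $g_0$, whence $\E(Y\mid\X=\xn)=g(\t{\B}\xn)=g_0(\t{(\B\Gammabf)}\xn)$ with $\rank(\B\Gammabf)=k-1$. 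This exhibits a linear sufficient reduction of the conditional mean of dimension $k-1$, contradicting $\dim(\ms)=k$. Therefore $\M\succ0$, so $\rank(\Sigmabf_\nabla)=k=\rank(\B)$, and combined with the first inclusion, $\Span\{\Sigmabf_\nabla\}=\Span\{\B\}$.

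The hard part — indeed the only part carrying real content — is this last step: the chain-rule factorization is routine, but deducing full rank of $\M$ genuinely needs (i) convexity and full-dimensionality of the support of $\t{\B}\X$, so that ``$g$ has vanishing $\vb$-directional derivative on the support'' can be promoted to ``$g$ factors through the projection onto $\vb^{\perp}$'', and (ii) the reading of $\ms$ as the \emph{minimal} mean subspace — without minimality the statement is false, since one could pad $\B$ with a column that $g$ ignores. Both ingredients are in force: (i) is the convex-support assumption stated in Section~\ref{sec:meanSubspace}, and (ii) is encoded in model~\eqref{mod:mean_basic} through $\rank(\B)=k$ together with $\Span\{\B\}=\ms$.
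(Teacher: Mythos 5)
Your factorization step is exactly the paper's proof: the paper differentiates $\tilde g(\xn)=g(\t{\B}\xn)$ by the chain rule, obtains $\Sigmabf_\nabla=\B\,\E\left(\nabla g(\t{\B}\X)\nabla\t{g(\t{\B}\X)}\right)\t{\B}$, and then declares the proof complete. Where you genuinely diverge is that the paper stops there, which on its face only yields $\Span\{\Sigmabf_\nabla\}\subseteq\Span\{\B\}$; the reverse inclusion requires the inner $k\times k$ matrix $\M$ to be nonsingular, and the paper leaves this implicit. Your contradiction argument supplies precisely that missing piece: if $\M\vb=0$ then $\t{\vb}\nabla g$ vanishes on the support of $\t{\B}\X$, and the convex-support assumption lets you integrate this directional statement into a factorization $g=g_0\circ\t{\Gammabf}$, producing a $(k-1)$-dimensional mean reduction that contradicts $\dim(\ms)=k$. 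You are also right that this is where all the content lies and that some minimality reading of $\B$ is indispensable (padding $\B$ with a column $g$ ignores breaks the claimed equality), a caveat the paper's statement of model~\eqref{mod:mean_basic} does not make fully explicit. In short: your proof is correct, subsumes the paper's, and closes a gap the paper glosses over; the only cost is that you must invoke the convex-support and minimality hypotheses, which the paper states elsewhere (Section~\ref{sec:meanSubspace}) but does not cite in the theorem itself.
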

\begin{proof}
    Model~\eqref{mod:mean_basic} implies $\E(Y\mid \X = \xn) = \E(Y\mid \t{\B}\xn) = g(\t{\B}\xn)$. Differentiating yields
    \begin{align*}
        \nabla_\xn g(\t{\B}\xn) = \frac{\partial (\t{\B} \xn)}{\partial \xn} \nabla_{\t{\B}\xn} g(\t{\B}\xn) =\B \nabla g(\t{\B} \xn) 
    \end{align*}
    Therefore,  
    \begin{multline*}
        \Sigmabf_\nabla = \E\left(\nabla_\xn g(\t{\B}\X)\nabla_\xn \t{g(\t{\B}\X)}\right) \\
            = \B \E\left(\nabla g(\t{\B} \X) \nabla \t{g(\t{\B} \X)} \right) \t{\B}
    \end{multline*}
    which completes the proof.
\end{proof}

For $\V \in \spc(p,k)$, let 
\begin{align}
    T(\V) &= \E(Y - \E(Y\mid \t{\V}\X))^2 \nonumber \\
          &= \E\left(\E[(Y - \E(Y\mid \t{\V}\X))^2\mid \t{\V}\X]\right) \label{NN_target} \\
          &= \E(\sigma^2(\t{\V}\X)) \label{MAVE_target_function}
\end{align}
where $\sigma^2(\t{\V}\X) = \E[(Y - \E(Y\mid \t{\V}\X))^2\mid \t{\V}\X]$.
$T(\V)$ in \eqref{MAVE_target_function} is the target function at the population level for $\mave$ and identifies $\ms$ as shown in Theorem~\ref{MAVE_thm}.
 
\begin{thm}\label{MAVE_thm}
    Assume model~\eqref{mod:mean_basic} holds and $T(\V)$ is defined in \eqref{MAVE_target_function}.  Then, 
    \begin{align}
        \Span\{\B\} &= \Span\{ \argmin_{\V \in \spc(p,k)} T(\V)\}.
    \end{align}
\end{thm}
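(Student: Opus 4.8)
The plan is to establish three things: that $\min_{\V\in\spc(p,k)}T(\V)=\eta^2$, that this value is attained when $\V$ is an orthonormal basis of $\ms$, and that conversely every minimizer spans $\ms$. The first two reduce to a second-moment identity; the converse is the substantive part, and is where the standing assumptions on $\X$ and on $g$, together with Theorem~\ref{thm_OPG}, come in.

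First I would rewrite the target. Because $\E(\epsilon\mid\t{\V}\X)=\E\{\E(\epsilon\mid\X)\mid\t{\V}\X\}=0$, we have $\E(Y\mid\t{\V}\X)=\E\{g(\t{\B}\X)\mid\t{\V}\X\}$, so
\[
Y-\E(Y\mid\t{\V}\X)=\big\{g(\t{\B}\X)-\E\big(g(\t{\B}\X)\mid\t{\V}\X\big)\big\}+\epsilon .
\]
Squaring, taking expectations, and observing that the cross term vanishes after conditioning on $\X$ (again using $\E(\epsilon\mid\X)=0$, the other factor being $\sigma(\X)$-measurable) gives
\[
T(\V)=\eta^2+\E\Big(\big\{g(\t{\B}\X)-\E\big(g(\t{\B}\X)\mid\t{\V}\X\big)\big\}^2\Big)\ \geq\ \eta^2 .
\]
Taking $\V=\B\in\spc(p,k)$ a basis of $\ms$ makes $\E(Y\mid\t{\B}\X)=\E(Y\mid\X)=g(\t{\B}\X)$, so $T(\B)=\eta^2$ and the minimum equals $\eta^2$. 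Hence for every minimizer $\V_0\in\spc(p,k)$ the residual term above is $0$, i.e.\ $g(\t{\B}\X)=\E\{g(\t{\B}\X)\mid\t{\V_0}\X\}$ $\Pb$-a.s., so $g(\t{\B}\X)$ agrees a.s.\ with a measurable function of $\t{\V_0}\X$.

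It remains to deduce $\Span\{\B\}\subseteq\Span\{\V_0\}$, after which equality follows by a dimension count since both spaces have dimension $k$. Since $\mathrm{supp}(\X)$ is convex and $\Sigmaxbf\succ 0$, the support has nonempty interior $G$ and $\X$ concentrates on the open convex set $G$, so that a.e.-$\x$-in-$G$ statements become $\Pb$-a.s.\ statements and directional derivatives of $\x\mapsto g(\t{\B}\x)$ are well defined everywhere on $G$. Fix any $\bv\perp\Span\{\V_0\}$; then $\t{\V_0}(\x+t\bv)=\t{\V_0}\x$, so the a.s.\ representation forces, by a Fubini argument, $g(\t{\B}(\x+t\bv))=g(\t{\B}\x)$ for a.e.\ $\x\in G$ and a.e.\ small $t$. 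The map $t\mapsto g(\t{\B}(\x+t\bv))$ is $C^1$ and a.e.\ constant near $0$, hence constant, so $\t{\bv}\,\nabla_\x g(\t{\B}\x)=0$ for a.e.\ $\x\in G$, and thus $\Pb$-a.s. Letting $\bv$ range over $\Span\{\V_0\}^\perp$ yields $\nabla g(\t{\B}\X)\in\Span\{\V_0\}$ a.s., whence $\col(\Sigmabf_\nabla)\subseteq\Span\{\V_0\}$ (for $\bu\perp\Span\{\V_0\}$, $\t{\bu}\Sigmabf_\nabla\bu=\E[(\t{\bu}\nabla g(\t{\B}\X))^2]=0$ and $\Sigmabf_\nabla\succeq 0$). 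By Theorem~\ref{thm_OPG}, $\Span\{\B\}=\col(\Sigmabf_\nabla)\subseteq\Span\{\V_0\}$, which finishes the proof.

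The step I expect to be the crux is the passage from ``$g(\t{\B}\X)$ is a.s.\ a measurable function of $\t{\V_0}\X$'' to the gradient inclusion $\nabla g(\t{\B}\X)\in\Span\{\V_0\}$: this is the only place the model's regularity hypotheses are genuinely used, and it needs some care — convexity of the support (so the differentiation-along-$\bv$ and Fubini arguments are valid on an open, full-measure set) and continuity of $g$ (to upgrade ``a.e.\ constant'' to ``constant'') both enter here. Everything else — the conditional-expectation manipulations and the concluding dimension count — is routine.
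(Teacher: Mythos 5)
Your proof is correct, and its first half (the decomposition $T(\V)=\eta^2+\E\{g(\t{\B}\X)-\E(g(\t{\B}\X)\mid\t{\V}\X)\}^2$ via the vanishing cross term, and the attainment of the value $\eta^2$ at any orthonormal basis of $\ms$) is exactly the paper's argument. Where you genuinely diverge is the converse: the paper simply \emph{asserts} that $\Span\{\V\}\neq\Span\{\B\}$ implies $\E(Y\mid\t{\V}\X)\neq g(\t{\B}\X)$, offering no justification, whereas you actually prove that every minimizer $\V_0$ spans $\ms$. Your route --- from ``$g(\t{\B}\X)$ is a.s.\ a measurable function of $\t{\V_0}\X$'' through the Fubini argument along directions $\bv\perp\Span\{\V_0\}$, the upgrade from a.e.-constant to constant via continuity on the convex support, the resulting inclusion $\col(\Sigmabf_\nabla)\subseteq\Span\{\V_0\}$, and finally Theorem~\ref{thm_OPG} plus a dimension count --- is sound and correctly isolates where the convex-support and $C^1$ hypotheses are needed. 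What your approach buys is a complete argument for the substantive half of the theorem; what it costs is a dependence on the full strength of Theorem~\ref{thm_OPG} (the equality $\Span\{\Sigmabf_\nabla\}=\Span\{\B\}$, whose proof in the paper really only establishes $\Span\{\Sigmabf_\nabla\}\subseteq\Span\{\B\}$ and tacitly assumes $\E(\nabla g(\t{\B}\X)\nabla\t{g(\t{\B}\X)})$ is nonsingular), so the nondegeneracy of $g$ that the paper leaves implicit is inherited rather than resolved by your argument. That caveat aside, your write-up is strictly more rigorous than the paper's own proof.
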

\begin{proof}
    By \eqref{mod:mean_basic}, $Y = g(\t{\B}\X) + \epsilon$. Then, 
    \begin{align}
        T(\V) &= \E\left(g(\t{\B}\X) - \E(Y\mid \t{\V}\X)\right)^2 \nonumber \\
              &\phantom{=} + 2 \E\left([g(\t{\B}\X) - \E(Y\mid \t{\V}\X)]\epsilon\right) + \var(\epsilon) \notag \\
              &= \E\left(g(\t{\B}\X) - \E(Y\mid \t{\V}\X)\right)^2 + \var(\epsilon) \geq \var(\epsilon) \label{MAVE1}
    \end{align}
    since
    \begin{multline*}
        \E\left([g(\t{\B}\X) - \E(Y\mid \t{\V}\X)]\epsilon\right) \\
            = \E\left([g(\t{\B}\X) - \E(Y\mid \t{\V}\X)]\E\left(\epsilon\mid \X \right)\right) = 0
    \end{multline*}
    due to $\E(\epsilon \mid \X) = 0$.

    For all $\V$ such that $\Span\{\V\} = \Span\{\B\}$, \eqref{MAVE1} yields
    \begin{displaymath}
        T(\V) = \var(\epsilon) = \eta^2
    \end{displaymath}
    For all $\V$ such that $\Span\{\V\} \neq \Span\{\B\}$, $\E(Y\mid \t{\V}\X) \neq g(\t{\B}\X)$, and
    \begin{displaymath}
        T(\V) = \E\left(g(\t{\B}\X) - \E(Y\mid \t{\V}\X)\right)^2 + \var(\epsilon) > \var(\epsilon) = \eta^2,
    \end{displaymath}
    which completes the proof.
\end{proof}

Next we define the three different neural nets we use in the proposed estimators.

\begin{definition}\label{defn:NNstages}
    Using the notation for MLP's in Section~\ref{sec:neural_networks_MLP}, we define
    \begin{align}
              \gopg(\xn; \Thetabf_1) &= f_{\text{MLP}_N}(\xn; \Thetabf_1): \real^p \to \real \label{NN_OPG_MLP}  \\
             \gwrap(\xn; \Thetabf_2) &= f_{\text{MLP}_N}(\xn; \Thetabf_2): \real^k \to \real \label{NN_wrap_MLP} \\
        \gnn(\xn; (\V, \Thetabf_2) ) &= \gwrap(\t{\V} \xn; \Thetabf_2):    \real^p \to \real \label{NN_MLP}
    \end{align}
    where $\V \in \spc(p,k)$ given in \eqref{Smanifold}.
\end{definition}

Our estimation method is run in two stages. The first uses $\gopg(\xn; \Thetabf_1)$ in \eqref{NN_OPG_MLP} as an estimator for $\tilde{g}$ in Theorem~\ref{thm_OPG} . The second, or,  \textit{refinement} stage, estimates $g$ in model~\eqref{mod:mean_basic}  with $\gwrap(\xn; \Thetabf_2)$ in \eqref{NN_wrap_MLP} and $g(\t{\B}\xn)$ in model~\eqref{mod:mean_basic} with $\gnn(\xn; (\V,\Thetabf_2) )$ in \eqref{NN_MLP}. Both \eqref{NN_OPG_MLP} and \eqref{NN_MLP} are used to estimate $\E(Y \mid \X = \xn)$ but the latter uses the specific structure of model~\eqref{mod:mean_basic} for refinement.

The MLP in \eqref{NN_MLP} is the same as in \eqref{NN_wrap_MLP} save for an additional input layer with the identity as the activation function; i.e., $\phi^{(1)}(x) = x$. Further, the first layer forms a bottleneck, as depicted in Figure~\ref{fig:architecture_MLP_bottleneck}, since $\t{\V} \xn \in \real^k$ with $k \ll p$.  \eqref{NN_OPG_MLP} serves as an estimate for $\E(Y \mid \X = \xn) = g(\t{\B}\xn): \real^p \to \real$, \eqref{NN_wrap_MLP} for $g:\real^k \to \real$, and \eqref{NN_MLP} is a refined estimate of $g(\t{\B}\xn)$ in model~\eqref{mod:mean_basic}. The bottleneck of the \texttt{MLP} in \eqref{NN_MLP} is conceptually similar to autoencoders [see, e.g.,  \cite{NNAutoencoder,NNAutoencoder2}] with the important difference  that the latter are analogous to nonlinear principal components and  unsupervised; that is,  independent of the response. 

Figure~\ref{fig:architecture_MLP_bottleneck} illustrates the MLP in \eqref{NN_MLP} with input dimension $p = 4$, $\xn = \t{(x_1,\ldots, x_4)} \in \real^4$. The first layer represents the 2-dimensional linear reduction $\V \in \spc(4,2)$ and the rest of the network coincides with \eqref{NN_wrap_MLP}.

For the proposed estimator, we can use any \texttt{MLP} that has more neurons than $p$ in the first hidden layer in \eqref{NN_OPG_MLP} and \eqref{NN_wrap_MLP}. For the sake of simplicity we opted for a $1$ layer \texttt{MLP} with $512$ neurons as default, since this gave satisfactory results in simulations. Further, the performance in simulations was robust against different architectures if sufficient regularisation is applied via \textit{dropout} in the training of the \texttt{MLP} [see \cite{dropout}].
\begin{figure}[htbp!]
	\centering
	\begin{tikzpicture}[NN]
		\foreach \x in {1,...,4} {
			\node[neuron, fill=white, draw] (I-\x) at (-1, -\x) {};
		}
        \node[neuron, fill=white, draw] (R-1) at (1, -2) {};
        \node[neuron, fill=white, draw] (R-2) at (1, -3) {};
		\foreach \x in {1,...,6} {
			\node[neuron] (H1-\x) at (3, {1-\x}) {};
		}
		\foreach \x in {1,2,3,4} {
			\node[neuron] (H2-\x) at (5, -\x) {};
		}
		\node[neuron] (O) at (7, -2.5) {};

		\foreach \from in {1,...,4} {
			\foreach \to in {1,...,2} {
				\draw[edge] (I-\from) -- (R-\to);
			}
		}
		\foreach \from in {1,...,2} {
			\foreach \to in {1,...,6} {
				\draw[edge] (R-\from) -- (H1-\to);
			}
		}
		\foreach \from in {1,...,6} {
			\foreach \to in {1,...,4} {
				\draw[edge] (H1-\from) -- (H2-\to);
			}
		}
		\foreach \from in {1,...,4} {
			\draw[edge] (H2-\from) -- (O);
		}
		\node at (-1, .4) {Input};
		\node at ( 1, .4) {Reduction};
		\node at ( 3, .4) {Hidden${}_1$};
		\node at ( 5, .4) {Hidden${}_2$};
		\node at ( 7, .4) {Output};
        \node (h_0) at (-1, -5.6) {$\xn$};
        \node (h_1) at ( 1, -5.6) {$\t{\V} \xn$};
        \node (h_3) at ( 7, -5.6) {$\hat{y}$};
        \draw[->] (h_0) -- (h_1);
        \draw[->] (h_1) -- (h_3) node[pos=0.5, anchor=south] {$\gwrap$};
        \node (in)  at (-1, -6) {$\xn$};
        \node (out) at ( 7, -6) {$\hat{y}$};
        \draw[->] (in) -- (out) node[pos=0.5, anchor=north] {$\gnn$};
    \end{tikzpicture}
    \caption{\label{fig:architecture_MLP_bottleneck}{Illustration of $\gnn$ in \eqref{NN_MLP}}}
\end{figure}

\subsection{Initial Estimator}

We assume  $(Y_i,\t{\X}_i)_{i=1,\ldots,n}$ is a random sample from the joint distribution of $Y$ and $\X$ given by model~\eqref{mod:mean_basic}. Let 
\begin{align}\label{NN_OPG_target}
    T_\nnopg( \Thetabf_1) = \frac{1}{n} \sum_{i=1}^n \Loss\left(Y_i,\gopg(\X_i; \Thetabf_1)\right) 
\end{align}
be the objective function for the initial estimator, where $\Loss: \real \times \real \to [0,\infty)$ is a loss function. The training of the initial MLP in \eqref{NN_OPG_MLP} is carried out by minimizing the objective function in \eqref{NN_OPG_target}, 
\begin{align}\label{NN_OPG_training}
    \widehat{\Thetabf}_1 = \argmin_{\Thetabf_1}  T_\nnopg( \Thetabf_1)  
\end{align}
The resulting $\gopg(\xn;\widehat{\Thetabf}_1)$ is an estimate of $\E(Y \mid \X = \xn) = g(\t{\B}\xn)$ in model~\eqref{mod:mean_basic} if the squared error loss, 
\begin{align}\label{sqloss}
    \Loss(x,y) &= \left(x - y\right)^2,
\end{align}
is used.

We  set $\bb_i = \nabla_\xn \gopg(\X_i,\widehat{\Thetabf}_1) \in \real^p$ where $\widehat{\Thetabf}_1$ is defined in \eqref{NN_OPG_training}, which is an estimate for $\nabla g(\t{\B}\X_i) \in \real^p$. We let
\begin{align}\label{NN_OPG_matrix}
    \widehat{\Sigmabf}_{\nnopg} = \frac{1}{n}\sum_{j=1}^n \bb_j\t{\bb_j} \in \real^{p \times p}
\end{align}
that is an estimator for $\Sigmabf_\nabla$ in  \eqref{gradient_matrix}. The $\nnopg$ estimator is defined as 
\begin{align}\label{NN_OPG_estimator}
   \widehat{\B}_{\nnopg} = (\vb_1,\ldots,\vb_k) \in \spc(p,k)
\end{align}
where  $\vb_1,\ldots,\vb_k$ are the first $k$ eigenvectors of \eqref{NN_OPG_matrix}. 

By Theorem~\ref{thm_OPG}, under model~\eqref{mod:mean_basic}, 
$\Span\{\Sigmabf_\nabla\} = \Span\{\B\} = \ms$. If we assume that \eqref{NN_OPG_matrix} is a consistent estimator for \eqref{gradient_matrix}, then the $\nnopg$ estimator in \eqref{NN_OPG_estimator} is  consistent  for $\ms$ in model~\eqref{mod:mean_basic}. $\widehat{\B}_{\nnopg}$ in \eqref{NN_OPG_estimator} is used as an initial starting value for the optimization in \eqref{refinment_training} in order to obtain the refined estimator $\widehat{\B}_{\nn}$.

The loss function $\Loss$  is determined by model~\eqref{mod:mean_basic} and the conditional distribution of $Y \mid \X$. If the response $Y$ and predictors $\X$ are continuous and the error term in \eqref{mod:mean_basic} has a conditional Gaussian distribution, then the squared error loss function corresponds to the likelihood function. If $Y$ is Bernoulli  or multinomial distributed, then the cross entropy loss function can be used, and if $Y$ is Poisson distributed then the deviance is the natural choice for the loss function. In general, the loss function is the relevant part of the likelihood in the conditional distribution of $Y \mid \X$ and agrees with the loss function in generalized linear models for conditional distributions in the exponential family. 

\subsection{Refinement Estimator}
The second stage is the refinement of the initial estimator in \eqref{NN_OPG_estimator}. The $\nnopg$ estimator is obtained via the gradient of the trained MLP in \eqref{NN_OPG_MLP}. The training of the function $\gopg:\real^p \to \real$ in \eqref{NN_OPG_training} suffers from the curse of dimensionality if the input dimension is large. In this  case, the accuracy of the estimation of \eqref{NN_OPG_matrix} is adversely affected as learning a nonlinear function and its gradient with a high dimensional input space is difficult. The 
refinement procedure  explicitly incorporates the defining assumption of model~\eqref{mod:mean_basic} that a lower dimension projection of the input, $\t{\B}\X$, can replace the original input $\X$. This is realised via the function $\gnn$ in \eqref{NN_MLP}.

\begin{definition}
The target function for the refinement estimator is given by
\begin{align}\label{refinment_target_estimate}
    T_{\nn}(\V, \Thetabf_2) = \frac{1}{n} \sum_{i=1}^n \Loss\left(Y_i,\gnn(\X_i; (\V, \Thetabf_2))\right)
\end{align}
where $\Loss: \real \times \real \to [0,\infty)$ is a loss function, and $\V \in \spc(p,k)$. Further, we set 
\begin{align}\label{refinment_training}
    (\widehat{\B}_{\nn}, \widehat{\Thetabf}_2) = \argmin_{\V \in \spc(p,k), \Thetabf_2} T_{\nn}(\V, \Thetabf_2)
\end{align}
and  the $\nn$ refinement estimator is given by $\widehat{\B}_{\nn}$.
\end{definition}
The simultaneous optimization with respect to $\V$ and $ \Thetabf_2$ in \eqref{refinment_training} corresponds to simultaneous estimation of the sufficient reduction $\B$ and the link function $g$ in model~\eqref{mod:mean_basic}. The partially trained function $T_{\nn}(\cdot, \widehat{\Thetabf}_2)$ is an estimate for \eqref{NN_target} if the squared error loss function \eqref{sqloss} 
is used. 

Under squared error loss, if \eqref{NN_wrap_MLP}
is a consistent estimator for $g$ in \eqref{mod:mean_basic}, then $T_{\nn}(\cdot, \widehat{\Thetabf}_2)$ is  consistent  for $T(\V)$ in \eqref{NN_target}. By Theorem~\ref{MAVE_thm}, $\ms = \Span\{\B\} = \Span\{\argmin_{\V \in \spc(p,k)} T(\V)\}$ and the expectation would be that, subject to regularity conditions, the $\nn$ refinement estimator is consistent.

Nevertheless, proving the consistency of \eqref{NN_OPG_estimator} and the refinement estimator $\widehat{\B}_{\nn}$ in \eqref{refinment_training} requires  neural nets consistently estimate any function $g$ from a sample of model~\eqref{mod:mean_basic}. To the best of the authors' knowledge there is no such result available in the literature.

The optimization in \eqref{refinment_training} is solved via stochastic gradient descent training [see Section~\ref{sec:algo} or any other first order training algorithm for neural nets]. These algorithms require a starting value for the parameters, $(\V, \Thetabf_2)$, to be trained. In simulations the accuracy of the refined estimate $\widehat{\B}_{\nn}$ in \eqref{refinment_training} was very sensitive to the initialization of $\V$. 
We conjecture that a consistent estimator for $\B$ in \eqref{mod:mean_basic}, such as \eqref{NN_OPG_estimator}, is required in order to obtain a consistent estimate from the refinement procedure in \eqref{refinment_training}.

\section{Algorithm}\label{sec:algo}
In this section the computation  algorithm of the estimates $\widehat{\B}_{\nnopg}$ in \eqref{NN_OPG_estimator} and $\widehat{\B}_{\nn}$ in \eqref{refinment_training} is given. Both estimators depend on training a \texttt{MLP} using \texttt{tensorflow} \cite{tensorflow2015-whitepaper} with an \texttt{R} interface provided by the \texttt{R}-package \cite{tensorflow}.

For training the neuronal networks we use the \texttt{RMSProp} \cite{hinton2012RMSprop} algorithm, which is a variant of the (mini-batch) \textit{stochastic gradient descent} (SGD) algorithm \cite{Bottou1998SGD} (see, also, \cite{Goodfellow-et-al-2016} and \cite{tensorflow}). 

For regularisation during the training, we apply \texttt{dropout} with a rate of $0.4$ (see \cite{dropout}) after each fully connected hidden layer; i.e.,  the nodes are randomly set to $0$ with probability $0.4$ during each update step in the training procedure.

For a sample $\t{(Y_i, \t{\X}_i)}_{i=1,\ldots,n}$, we fix the \texttt{batch\_size} to $m \leq n$, and the \texttt{number of epochs} to $ep$. Let $ f_{\text{MLP}_N}(\xn; \Thetabf)$ be an $N$-layer \texttt{MLP}. The objective function is given by
\begin{equation*}
    T(\Thetabf) = \frac{1}{n} \sum_{i=1}^n \Loss\left(Y_i, f_{\text{MLP}_N}(\X_i; \Thetabf) \right)
\end{equation*}

A rough outline of stochastic gradient descent (\texttt{SGD}) is given in  Algorithm \ref{algo1}. 

\begin{algorithm}[ht]\label{algo1}
\SetAlgoLined
\KwResult{$\widehat{\Thetabf}^{(\text{end})}= \argmin_{\Thetabf} T(\Thetabf)$}
 Initialize:
 $\Thetabf^{(0)} $\\
 \For{ $u \in \{1,\ldots,ep\}$}{
 \For{$j \in \{1,\ldots, \lfloor n/m \rfloor \}$}{
        Determine the step sizes $\tau \in \real^{\dim(\Thetabf)}$ by \texttt{RMSProp}
        $\Thetabf^{(k+1)} = \Thetabf^{(k)} + \diag(\tau)  \sum_{l = (j-1)m+1}^{\min(jm, n)} \nabla_{\Thetabf} \Loss(Y_l, f_{\text{MLP}_N}(\X_l; \Thetabf^{(k)}))$
 }
Shuffle the dataset $\t{(Y_i, \t{\X}_i)}_{i=1,\ldots,n}$ randomly
}
\caption{Stochastic gradient descent outline}
\end{algorithm}


If the sample size $n$ is not a multiple of the batch size $m$, then in the last run of the inner loop the sum of gradients is extended to $n$. Further, if there are restrictions placed on some of the parameters,  the corresponding part of the parameter vector $\Thetabf$ is projected back to the restricted set after applying the update step. This is the case for $\V \in \spc(p,k)$ in \eqref{NN_MLP}, where the projection back onto the Stiefel manifold in \eqref{Smanifold} is done via a \textit{polar decomposition}.

An important feature of stochastic gradient descent training is that the complexity is linear in the sample size $n$ if the number of epochs $ep$ and the batch size $m$ are chosen independently of $n$. Moreover, in simulations we observed  that fewer epochs $ep$ suffice to find well trained neural nets for large samples. 

The proposed $\nn$ estimator for the \textit{mean subspace} is a two stage procedure, as described next.
\begin{description}
    \item[Stage 1:] Obtain the $\widehat{\B}_{\nnopg}$ estimator in \eqref{NN_OPG_estimator} by solving the optimization in \eqref{NN_OPG_training}. This is done via the stochastic gradient descent (SGD) algorithm with random initialization of the starting value $\Thetabf^{(0)}_1$ to obtain the estimate $\widehat{\Thetabf}_1 = (\widehat{\W}^{\nnopg}_1, \widehat{\bb}^{\nnopg}_1)$ in \eqref{NN_OPG_training}.
    \item[Stage 2:] Solve the optimization in \eqref{refinment_training} via the stochastic gradient descent (SGD) algorithm. The initial parameters of $\gnn$ in \eqref{NN_MLP} are set to $(\widehat{\B}_{\nnopg} ,\Thetabf_2^{(0)})$ where $\Thetabf_2^{(0)} = ( \widehat{\W}^{\nnopg}_1 \widehat{\B}_{\nnopg}, \widehat{\bb}^{\nnopg}_1)$. 
\end{description}
In the second stage, we use the weights and bias obtained by training $\gopg$ in  $\eqref{NN_OPG_MLP}$ as initialization for the parameters of $\gwrap$ in \eqref{NN_wrap_MLP}, and $\widehat{\B}_{\nnopg}$ as initial value for $\V \in \spc(p,k)$ in \eqref{NN_MLP}. 

This two stage initialization scheme is important for the performance of the proposed estimator since a random initialization of the parameters of the second stage adversely affects the accuracy of the estimator. 


\section{Analogy to Minimum Average Variance Estimation (MAVE) and Outer Product Gradient (OPG)} \label{sec:MAVEandOPG}

In this section we draw the analogy of $\opg$ and $\mave$, both introduced by \cite{Xiaetal2002}, with the proposed estimators.
The theoretical motivation for $\opg$ is given by Theorem~\ref{thm_OPG} and for $\mave$  by Theorem~\ref{MAVE_thm}. First we start by describing the $\mave$ algorithm briefly.

For the estimation of $\ms = \Span\{\B\}$ from an i.i.d. sample $(Y_i,\X_i)_{i=1}^n$ of model~\eqref{mod:mean_basic} we replace the target function $T(\V)$ in Theorem~\ref{MAVE_thm} by an estimate $T_n(\V)$, where $\V \in \spc(p,k)$. A local linear expansion of $\E(Y_i\mid \t{\V}\X_i)$ around $\X_0$ yields
\begin{align}\label{loclinear_approx}
    \E(Y_i\mid \t{\V}\X_i) \approx a + \t{\bb} \t{\V}(\X_i - \X_0),
\end{align}
where $a = g(\t{\V}\X_0) \in \real$, $\bb = \nabla g(\t{\V}\X_0) \in \real^k$. Therefore we obtain the following approximation for $\sigma^2(\t{\V}\X_0)$ in \eqref{MAVE_target_function},
\begin{multline*}
    \sigma^2(\t{\V}\X_0) \approx \sum_{i=1}^n (Y_i - \E(Y\mid \t{\V}\X))^2 w_{i,0} \\
                         \approx \sum_{i=1}^n (Y_i - a - \t{\bb} \t{\V}(\X_i - \X_0))^2w_{i,0}
\end{multline*}
for some weights $w_{i,0}$ that sum to 1 ($\sum_i w_{i,0} = 1$). 
The weights  
play a crucial role in the estimation. They are  given by
\begin{equation}\label{MAVE_weight}
    w_{i,0}(\V) := \frac{K\left(\frac{\t{\V}(\X_i-\X_0)}{h}\right)}{\sum_l K\left(\frac{\t{\V}(\X_l-\X_0)}{h}\right)},
\end{equation}
for a $k$ dimensional kernel $K(\cdot)$, and a bandwidth $h \in \real_+$. 

Since $K(\cdot) = \tilde{K}(\|\cdot\|_2)$ for a monotone decreasing univariate kernel $\tilde{K}(\cdot):\real \to \real_+$, \cite{Xiaetal2002} recognized that the weights depend only on the distance of $\t{\V}(\X_l-\X_0)$ (the further $\t{\V}\X_i$ is away from $\t{\V}\X_0$ the worse the linear expansion is and the less weight we assign). 

Then, an estimator for $\sigma^2(\t{\V}\X_0)$ in \eqref{MAVE_target_function} is given by
\begin{equation}\label{sigmahat}
    \hat{\sigma}^2(\t{\V}\X_0) := \min_{a,\bb} \sum_{i=1}^n  (Y_i - a - \t{\bb} \t{\V}(\X_i - \X_0))^2w_{i,0}(\V)
\end{equation}
and the target function $T(\V)$ is estimated by
\begin{multline}\label{MAVE_target_func}
    T_n(\V) = \frac{1}{n} \sum_{j=1}^n \hat{\sigma}^2(\t{\V}\X_j) \\
            = \min_{a_j,\bb_j} \frac{1}{n} \sum_{j=1}^n \sum_{i=1}^n (Y_i - a_j - \t{\bb_j}\t{\V}(\X_i - \X_j))^2 w_{i,j}(\V)
\end{multline}
where the weights are given in \eqref{MAVE_weight}. The Gaussian kernel is usually used with  bandwidth satisfying $h = h_n \propto n^{-1/(4+k)}$, as typically done in nonparametric function estimation, in order to obtain optimal asymptotic properties. 
 
\begin{definition}
    The $\mave$ estimator for $\ms = \Span\{\B\}$ in model \eqref{mod:mean_basic} is given by
    \begin{align}\label{MAVE_estimator}
        \widehat{\B}_{\mave} := \argmin_{\V \in  \spc(p,k)}T_n(\V).
    \end{align}
\end{definition}

\subsection{Analogy of \texorpdfstring{$\nn-\sdr$}{NNSDR} estimation to \texorpdfstring{$\mave$}{MAVE}}
The optimization 
in \eqref{sigmahat} corresponds to local linear smoothing  of $\E(Y_i\mid \t{\V}\X_i)$ with weights given in \eqref{MAVE_weight}.
After the local linear estimates $\hat{a}_j, \hat{\bb}_j$ in \eqref{MAVE_target_func} are obtained,  assume that the weights in \eqref{MAVE_weight} are given by $w_{i,j}(\V) = 1$ if $i =j$ and $0$  if $i \ne j$. Then, the target function of $\mave$ in \eqref{MAVE_target_func} can be written as
\begin{multline}\label{analogy_to_MAVE}
    T_n(\V) = \frac{1}{n} \sum_{j=1}^n \hat{\sigma}^2(\t{\V}\X_j)
            = \frac{1}{n} \sum_i \left(Y_i - \widehat{\E}(Y_i \mid \t{\V} \X_i)\right)^2 \\
            = \frac{1}{n}  \sum_i \Loss\left(Y_i, \widehat{\E}(Y_i \mid \t{\V} \X_i) \right)
\end{multline}
where $\Loss$ is the squared error loss and  $\widehat{\E}(Y_i \mid \t{\V} \X_i)$ the  local linear smooth. Under this simplifying assumption, \eqref{analogy_to_MAVE} is the same as \eqref{NN_target} except that the conditional expectation is estimated via local linear smoothing in $\mave$  as opposed  to neural nets for $\nn$  in \eqref{refinment_training}.

\subsection{Analogy of \texorpdfstring{$\nnopg$}{NNSDR} to \texorpdfstring{$\opg$}{OPG}}
The $\opg$ estimator  estimates \eqref{gradient_matrix} in Theorem~\eqref{thm_OPG} via local linear smoothing of $\nabla g(\t{\B}\X_i)$. 
Specifically, 
if $\V = \I_p$ in \eqref{MAVE_target_func}, we let  $(a_j,\t{\bb_j})_{j=1}^n$ denote the solutions of the optimization  in \eqref{MAVE_target_func}. Then, $\bb_j$ is an estimate for $\nabla g(\t{\B}\X_j)$ and 
\begin{align}\label{OPG_matrix}
    \widehat{\Sigmabf}_\nabla = \frac{1}{n}\sum_{j=1}^n \bb_j\t{\bb_j}
\end{align}
is an estimator for $\Sigmabf_\nabla$ in Theorem~\ref{thm_OPG}.

\begin{definition}
    The \textit{outer product gradient} ($\opg$) estimator for $\ms = \Span\{\B\}$ is defined as
    \begin{align}\label{OPG_estimator}
    \widehat{\B}_{\opg} = (\vb_1,\ldots,\vb_k)
    \end{align}
    where  $\vb_1,\ldots,\vb_k$ are the first $k$ eigenvectors of \eqref{OPG_matrix}.
\end{definition}

An iterative algorithm to solving the optimization problem in \eqref{MAVE_estimator} is given in \cite{Xiaetal2002}. The $\opg$ estimator in \eqref{OPG_estimator} can be used as an initial value for the optimization procedure of the $\mave$ estimator.


\section{Simulations}\label{sec:sim}
We compare the estimation accuracy of $\nn$ estimation  with the forward model based sufficient dimension reduction methods, and \textit{mean outer product gradient estimation} (\texttt{meanOPG}), \textit{mean minimum average variance estimation} (\texttt{meanMAVE} \cite{Xiaetal2002}) and the  recently developed \textit{conditional variance estimator} ($\cve$)  \cite{FertlBura2021a}. The first two, \texttt{meanOPG} and \texttt{meanMAVE},  are implemented in the $\texttt{R}$-package \cite{MAVEpackage}, and $\cve$ in the \texttt{R} package \texttt{CVarE} \cite{CVarE}.  

We report results for three architectures for $\gopg$ and $\gwrap$ used in $\nn$ estimation. The first is a single layer \texttt{MLP} with $128$ hidden neurons, the second has $512$ and the third is a two layer \texttt{MLP} with $48$ hidden neurons each. The results were largely undifferentiated for hidden neuron values between 128 and 512. For the two layer MLP, we obtained similar results for more than 48 neurons, which is already a small number. All three architectures use dropout (see \cite{dropout}) with probability $0.4$\footnote{Dropout rates ranging from 0 to 0.6 were tried and 0.4 was found to yield the best accuracy in reduction estimation.}  after each fully connected hidden layer except in the reduction layer of the $\gnn$. All architectures in \eqref{NN_OPG_MLP} are trained in \eqref{NN_OPG_training} with $ep = 200$ epochs and \texttt{batch\_size} $m = 32$. The refinement training in \eqref{refinment_training} uses $ep = 400$ \texttt{epochs} and again \texttt{batch size} $m = 32$. We use the estimation algorithm in Section~\ref{sec:algo}. The code is available at \url{https://git.art-ist.cc/daniel/NNSDR}.

We consider the same six models (M1-M6) as in \cite{FertlBura2021a}, which are reproduced in Table~\ref{tab:mod}. 
Throughout, we set $p=20$,  $\bb_1 = (1,1,1,1,1,1,0, ...,0)^T/\sqrt{6}$, $\bb_2 = (1,-1,1,-1,1,-1,0,...,0)^T/\sqrt{6} \in \real^p$ for M1-M5. For M6, $\bb_1 = \eb_1, \bb_2 = \eb_2$ and $\bb_3 = \eb_p$
 , where $\eb_j$ denotes the $p$-vector with $j$th  element equal to 1 and all others are 0. In M7, the first three columns are the identity vectors and $\bb_4 = (2\eb_4+\eb_5)/\sqrt{5}$ which is taken from \cite{friedberg_local_2020}. The error term $\epsilon$ is independent of $\X$ for all models. In M2, M3, M4, M5 and M6, $\epsilon \sim N(0,1)$. For M1, $\epsilon$ has a generalized normal distribution $GN(a,b,c)$ with densitiy $f_{\epsilon}(z) = c/(2b\Gamma(1/c))\exp((|z-a|/b)^c)$ [see \cite{gnorm}], with location 0 and shape-parameter 0.5  for M1, and the scale-parameter is chosen such that $\var(\epsilon) = 0.25$. 
The dimension $k$ is assumed to be known throughout. 

\begin{table}
\caption{\label{tab:mod}Models used in the simulations. The distribution of $\epsilon\sim GN(0,\sqrt{1/2},0.5)$ is a generalized normal distribution in Model $M1$. For all others, $\epsilon\sim\mathcal{N}(0, 1)$.}
\centering
\begin{tabular}{lllccc}
\hline\noalign{\smallskip}
Name       & Model                                                               & $\X$ distribution                   & $k$ & $n$ \\
\noalign{\smallskip}\hline\noalign{\smallskip}
  M1${}^a$ & $Y = \cos(\bb_1^T\X) + \epsilon$                                    & $\X \sim N_p(\0,\Sigmabf)$          &  1  & 100 \\
  M2       & $Y = \cos(\bb_1^T\X) + 0.5\epsilon$                                 & $\X \sim Z \1_{p} + N_p(\0,\I_{p})$ &  1  & 100 \\
  M3       & $Y = 2\log(|\bb_1^T\X|+2)+ 0.5\epsilon$                             & $\X \sim N_p(\0,\I_p)$              &  1  & 100 \\
  M4       & $Y = (\bb_1^T\X)/(0.5 +(1.5 + \bb_2^T\X)^2) + 0.5\epsilon$          & $\X \sim N_p(0,\Sigmabf)$           &  2  & 200 \\
  M5       & $Y = \cos(\pi \bb_1^T\X)(\bb_2^T\X + 1)^2 + 0.5\epsilon$            & $\X \sim U([0,1]^p)$                &  2  & 200 \\
  M6       & $Y = (\bb_1^T\X)^2 + (\bb_2^T\X)^2 + (\bb_3^T\X)^2 + 0.5\epsilon$   & $\X \sim N_p(\0,\I_p)$              &  3  & 200 \\
  M7       & $Y = 10\sin(\pi(\t{\bb_1}\X)(\t{\bb_2}\X))$                         & $\X \sim U([0,1]^p)$                &  4  & 600 \\ 
           & $\phantom{Y = }+ 20(\t{\bb_3}\X - 0.5)^2 + 5^{3/2}\t{\bb_4}\X + 5\epsilon$  & & & \\
\noalign{\smallskip}\hline
\end{tabular}
\end{table}

The variance-covariance structure of $\X$ in models M1 and M4 satisfies $\Sigmabf_{i,j} = 0.5^{|i-j|}$ for $i,j=1,\ldots,p$. In M5, $\X$ is uniform  with independent entries on the $p$-dimensional hyper-cube. The link functions of M4 is studied in  \cite{Xiaetal2002}
, but we use $p=20$ instead of 10 and a non identity covariance structure for M4. 
In M2, 
$Z \sim 2\text{Bernoulli}(0.3) - 1 \in \{-1,1\}$, 
where $\1_q = (1,1,...,1)^T\in \real^k$, this yields that $\X$ has a mixture normal distribution with a  mixture probability of $0.3$. M7 is a challenging four dimensional model studied in \cite{friedberg_local_2020}.

We generate $r=100$ replications of models M1 - M7 and estimate $\B$ using the different sufficient dimension reduction methods. The accuracy of the estimates is assessed using  
\begin{align}\label{err}
acc.err&= \frac{\|\Pbf_\B - \Pbf_{\widehat{\B}}\|}{\sqrt{2k}},
\end{align}
which lies in the interval $[0,1]$. The factor $\sqrt{2k}$ normalizes the distance, with values closer to zero indicating better agreement and values closer to one indicating strong disagreement. 

\begin{table}
\caption{\label{tab:summary}Mean and standard deviation of estimation errors for M1-M7}
\centering
\begin{tabular}{l|ccc|ccc}
    \hline\noalign{\smallskip}
          Model   & \opg    & \mave   &  \cve   &$\nn_{128}$&$\nn_{512}$&$\nn_{48, 48}$ \\
    \noalign{\smallskip}\hline\noalign{\smallskip}
M1 &  0.605  &  0.535  &  {\bf 0.396}  &  0.450  &  0.460  &  0.502  \\
   & (0.179) & (0.207) & (0.108) & (0.126) & (0.152) & (0.200) \\ \hline
M2 &  0.918  &  0.910  &  {\bf 0.455}  &  0.635  &  0.619  &  0.752  \\
   & (0.079) & (0.094) & (0.090) & (0.177) & (0.187) & (0.174) \\ \hline
M3 &  0.754  &  0.702  &  0.594  &  0.608  &  {\bf 0.578}  &  0.628  \\
   & (0.216) & (0.258) & (0.209) & (0.211) & (0.196) & (0.228) \\ \hline
M4 &  0.431  &  0.435  &  0.572  &  {\bf 0.408}  &  0.413  &  0.413  \\
   & (0.095) & (0.099) & (0.131) & (0.088) & (0.082) & (0.073) \\ \hline
M5 &  {\bf 0.415}  &  0.422  &  0.441  &  0.547  &  0.554  &  0.601  \\
   & (0.103) & (0.117) & (0.085) & (0.137) & (0.158) & (0.139) \\ \hline
M6 &  0.181  &  0.160  &  0.420  &  0.133  &  {\bf 0.122}  &  0.147  \\
   & (0.027) & (0.022) & (0.111) & (0.015) & (0.013) & (0.017) \\ \hline
M7 &  0.641  &  {\bf 0.637}  &  0.791  &  0.698  &  0.654  &  0.687  \\ 
   & (0.074) & (0.071) & (0.032) & (0.051) & (0.074) & (0.068) \\
    \noalign{\smallskip}\hline
	\end{tabular}
\end{table}

\begin{table}
\caption{\label{tab:summary_MSE} Mean and standard deviation of out of sample-prediction errors for M1-M7 }
\centering
\begin{tabular}{l|ccc|ccc}
    \hline\noalign{\smallskip}
          Model   &  \opg   &  \mave  &  \cve   & $\nn_{128}$ & $\nn_{512}$ & $\nn_{48,48}$ \\
    \noalign{\smallskip}\hline\noalign{\smallskip}
M1 &  0.523  &  0.427       &  {\bf 0.364} &  0.409       &  0.421       &  0.422       \\
   & (0.218) & (0.144)      & (0.059)      & (0.134)      & (0.187)      & (0.172)      \\ \hline
M2 &  0.736  &  0.738       &  {\bf 0.396} &  0.476       &  0.506       &  0.535       \\
   & (0.145) & (0.092)      & (0.044)      & (0.086)      & (0.111)      & (0.110)      \\ \hline
M3 &  0.525  &  0.518       &  0.432       &  0.417       &  0.430       &  {\bf 0.410} \\
   & (0.110) & (0.107)      & (0.083)      & (0.092)      & (0.089)      & (0.088)      \\ \hline
M4 &  0.711  &  0.713       &  0.647       &  {\bf 0.438} &  0.497       &  0.470       \\
   & (0.089) & (0.104)      & (0.096)      & (0.071)      & (0.135)      & (0.062)      \\ \hline
M5 &  0.462  &  0.461       &  {\bf 0.440} &  0.494       &  0.482       &  0.555       \\
   & (0.051) & (0.046)      & (0.043)      & (0.109)      & (0.103)      & (0.099)      \\ \hline
M6 &  0.838  &  0.765       &  2.354       &  0.782       &  {\bf 0.612} &  1.216       \\
   & (0.177) & (0.228)      & (0.914)      & (0.117)      & (0.081)      & (0.224)      \\ \hline
M7 & 33.112  & {\bf 33.066} & 33.884       & 33.955       & 35.272       & 34.136       \\ 
   & (1.961) & (1.973)      & (1.752)      & (1.910)      & (2.383)      & (1.836)      \\
    \noalign{\smallskip}\hline
	\end{tabular}
\end{table}

We report the average $acc.err$ and their standard deviations in Table~\ref{tab:summary}. All three network architectures, $\nn_{128}-\sdr$, $\nn_{512}$, $\nn_{48,48}$ yield similar results, highlighting the robustness of the method with respect to the architecture.  We choose $\nn_{512}$ as our default setup for the following simulations in Section~\ref{sec:bigData}. For M1 and M2, $\cve$ yields the most accurate estimation of the reduction $\B$, followed by the $\nn-\sdr$ estimators. $\opg$ and $\mave$ show the worst performance for the first two models. In M3, the $\nn-\sdr$ estimators are on par with $\cve$ and $\opg$, whereas $\mave$ exhibits the worst performance. In M4, the $\nn-\sdr$ estimators are on par with $\opg$, $\mave$, while $\cve$ is slightly worse than the rest. In M5, $\opg$ and $\mave$ are the most accurate, with $\cve$ nearly on par. For M6, the $\nn-\sdr$ estimators yield the best results followed by $\opg$ and $\mave$. M7 is challenging for all methods, with  $\mave$, $\opg$, and $\nn_{512}-\sdr$ the best performing three.

The $\nn_{512}-\sdr$ estimator is better or on par with $\opg$, $\mave$, and $\cve$ except for M5. 
This is not surprising in the case of $\nn-\sdr$ and $\opg$/$\mave$ as they are built on a similar idea.  The main difference is that $\mave$ uses local linear smoothing instead of neural nets.

Furthermore, in Table~\ref{tab:summary_MSE} we report the mean and standard deviation for the out of sample prediction errors in M1-M7 over $r = 100$ replications. 
For each data set and replication, we sampled a test set with sample size $1000$ from each model and predicted the response $Y$ via the \texttt{predict} function in \texttt{R} for $\opg$, $\mave$, and $\cve$. For $\nn-\sdr$, the predictions are given by $\gnn(\X_{\text{new}}, (\widehat{\B}_{\nn}, \widehat{\Thetabf}_1))$ in \eqref{NN_MLP}. For M1 and M2, $\cve$ gives the smallest out of sample prediction errors, followed by the $\nn-\sdr$ estimators which outperform both $\opg$ and $\mave$. For M3, all three $\nn-\sdr$ estimators are better or on par with $\cve$ and outperform $\opg$ and $\mave$. In M4,  $\nn-\sdr$ outperforms all, with $\cve$ the next best. For M5, $\cve$ performs better than  all other. 
Interestingly, in M6 $\cve$ and $\nn_{48,48}-\sdr$ do not work well in terms of prediction accuracy. In M7, $\mave$ performs the best followed by $\opg$ and $\cve$, but the $\nn-\sdr$ estimators trail closely. 

In sum, for relatively small to medium samples with few predictors $(p=20)$, $\nn-\sdr$ exhibits approximately similar and sometimes better performance than its $\sdr$ competitors.

\section{Large sample size simulation}\label{sec:bigData}
In this section we simulate data from models M6 and M7 and increase both the number of predictors $p$ and the sample size $n$. We monitor the estimation accuracy by $acc.err$ in \eqref{err}
as in Section~\ref{sec:sim}, the out of sample prediction error and the required time for the estimation of a reduction. 

We examined two simulation settings. In the first, we simulated from model M7 using the same $p=20$ and increased the sample size significantly ($n=2^u$, $u=7,9,11,13$).  
The results are displayed in Table~\ref{tab:summary_big_data}. We do not report values for $n=2048, 8192$ for $\cve$ as the runtime is too long. 
For $n \in \{ 128, 512\}$, $\mave$ is on par with $\nn_{512}-\sdr$, whereas for $n=2048, 8192$, $\nn_{512}-\sdr$ is slightly more accurate.

To explore how simultaneous growth of the sample size and the number of predictors affect performance, the second simulation revisits M6, where we successively increase both the sample size $n$ and $p$. The sample sizes considered are $n \in \{1000, 4000, 16000, 64000, 256000\}$ with corresponding $p \in \{32, 63, 126, 253 , 506\}$, which is roughly $p\propto \sqrt{n}$. We observed that for larger sample sizes, fewer epochs in the training phase of the neural net suffice. To demonstrate this, the number of epochs was reduced as $n$ and $p$ increased, as follows. For $(n,p)=(1000,32)$, 200 and 400 epochs were used in the two steps of the refined \nn, respectively, and at each subsequent setting, epoch numbers were halved.

The results of this simulation are shown in Table~\ref{tab:big_data_sim_M6}, which reports the mean and standard deviation (in parentheses) over $10$ repetitions of $acc.err$ in \eqref{err}, the out of sample prediction errors, and the runtime as measured internally via the user time obtained by the \texttt{R} function \texttt{system.time()}. 
The advantage of $\nn$ emerges in Table~\ref{tab:big_data_sim_M6}. As both $n$ and $p$ grow, $\mave$ is no longer computable in realistic time. For example, for $n=64000, p=253$, one calculation for $\mave$ takes about 12 hours to complete. Hence, we report only one value for $acc.err$ and prediction error. In contrast, $\nn$ takes about 9 minutes to complete one run for  the same setting and about 28 minutes to complete one run for $n=256000, p=506$. For $n=1000,4000,16000$, and $p=32,63,126$, $\nn-\sdr$ exhibits slightly higher values of estimation error and lower values of out-of-sample prediction error than $\mave$.

The mean runtimes of the two methods are plotted against the sample size in Figure~\ref{fig:runtime_plot}. We see that the runtime for $\mave$ explodes to exceed $12$ hours only for  one dataset at sample size $64 000$.  On the other hand, $\nn$ computes in reasonable time. 

Thus, $\nn-\sdr$ 
is the only forward model based $\sdr$ method that is applicable to truly large data while obtaining small estimation and out-of-sample prediction errors. Moreover, for smaller data sets, both in terms of $n$ and $p$, it maintains competitive performance.

\begin{table}
    \caption{\label{tab:summary_big_data}Mean and standard deviation (in parentheses) of estimation $acc.err$ for model  M7.}
    \centering
    \begin{tabular}{l|cccc}
    \hline\noalign{\smallskip}
         $n$ &    \opg   &   \mave   &   \cve    &$\nn_{512}$\\
    \noalign{\smallskip}\hline\noalign{\smallskip}
         128 &   0.802   &   \textbf{0.797}   &   0.834   &   0.801   \\
             & (0.02768) & (0.03561) & (0.02567) & (0.03541) \\ \hline
         512 &   0.691   &   \textbf{0.683}   &   0.778   &   0.697   \\
             & (0.05700) & (0.05923) & (0.03528) & (0.03639) \\ \hline
        2048 &   0.233   &   0.253   &           &   \textbf{0.209}   \\
             & (0.03161) & (0.07841) &           & (0.06000) \\ \hline
        8192 &   0.102   &   0.107   &           &   \textbf{0.082}   \\
             & (0.00738) & (0.00935) &           & (0.00722) \\
    \noalign{\smallskip}\hline
    \end{tabular}
\end{table}

\begin{table}
    \caption{\label{tab:big_data_sim_M6}Mean and standard deviation (in parentheses) of $acc.err$, out of sample prediction error, and runtime for model M6.}
    \centering
    \begin{tabular}{ll|cccc}
    \hline\noalign{\smallskip}
       $n$ & $p$ & Method    &  $acc.err$  &     MPE     & time [sec]  \\
    \noalign{\smallskip}\hline\noalign{\smallskip}
      1000 &  32 &\mave      &    0.063    &    0.393    &    5.48     \\
           &     &           &   (0.003)   &   (0.028)   &   (0.031)   \\
           &     &$\nn_{512}$&    0.055    &    0.343    &   48.65     \\
           &     &           &   (0.004)   &   (0.021)   &   (0.700)   \\ \hline
      4000 &  63 &\mave      &    0.045    &    0.351    &   71.20     \\
           &     &           &   (0.002)   &   (0.019)   &   (0.842)   \\
           &     &$\nn_{512}$&    0.050    &    0.313    &   91.35     \\
           &     &           &   (0.003)   &   (0.016)   &   (0.822)   \\ \hline
     16000 & 126 &\mave      &    0.032    &    0.337    & 1416.14     \\
           &     &           &   (0.001)   &   (0.016)   &  (34.367)   \\
           &     &$\nn_{512}$&    0.063    &    0.329    &  215.78     \\
           &     &           &   (0.002)   &   (0.025)   &   (1.793)   \\ \hline
     64000 & 253 &\mave      &    0.023    &    0.325    & $\sim 12 h$ \\
           &     &           &   (0)$^a$   &   (0)$^a$   &   (0)$^a$   \\
           &     &$\nn_{512}$&    0.095    &    0.387    &  542.26     \\
           &     &           &   (0.001)   &   (0.019)   &   (2.934)   \\ \hline
    256000 & 506 &$\nn_{512}$&    0.153    &    0.568    & 1673.03     \\
           &     &           &   (0.003)   &   (0.028)   &   (6.650)   \\
    \noalign{\smallskip}\hline
    \multicolumn{6}{l}{${}^a$ Only one repetition was run as it takes about 12 hours.}
    \end{tabular}
\end{table}

\begin{figure}
    \caption{\label{fig:runtime_plot}Runtime comparison of MAVE against $\nn_{512}$ with equivalent estimation performance.}
    \centering
\begin{tikzpicture}[x=1pt,y=1pt,scale=0.5,every node/.style={scale=0.8}]
\begin{scope}
\path[clip] ( 49.20, 61.20) rectangle (480.69,456.69);

\path[draw=black,line width= 0.4pt,line join=round,line cap=round] ( 65.18,164.72) --
	(165.06,190.36) --
	(264.94,225.34) --
	(364.83,262.84) --
	(464.71,308.68);
\end{scope}
\begin{scope}
\path[clip] (  0.00,  0.00) rectangle (505.89,505.89);

\path[draw=black,line width= 0.4pt,line join=round,line cap=round] ( 65.18, 61.20) -- (446.92, 61.20);
\path[draw=black,line width= 0.4pt,line join=round,line cap=round] ( 65.18, 61.20) -- ( 65.18, 55.20);
\path[draw=black,line width= 0.4pt,line join=round,line cap=round] (115.12, 61.20) -- (115.12, 55.20);
\path[draw=black,line width= 0.4pt,line join=round,line cap=round] (181.14, 61.20) -- (181.14, 55.20);
\path[draw=black,line width= 0.4pt,line join=round,line cap=round] (231.08, 61.20) -- (231.08, 55.20);
\path[draw=black,line width= 0.4pt,line join=round,line cap=round] (281.02, 61.20) -- (281.02, 55.20);
\path[draw=black,line width= 0.4pt,line join=round,line cap=round] (347.04, 61.20) -- (347.04, 55.20);
\path[draw=black,line width= 0.4pt,line join=round,line cap=round] (396.98, 61.20) -- (396.98, 55.20);
\path[draw=black,line width= 0.4pt,line join=round,line cap=round] (446.92, 61.20) -- (446.92, 55.20);
\node[text=black,anchor=base,inner sep=0pt, outer sep=0pt, scale=0.9] at ( 65.18, 39.60) {$1\times 10^{3}$};
\node[text=black,anchor=base,inner sep=0pt, outer sep=0pt, scale=0.9] at (115.12, 39.60) {$2\times 10^{3}$};
\node[text=black,anchor=base,inner sep=0pt, outer sep=0pt, scale=0.9] at (181.14, 39.60) {$5\times 10^{3}$};
\node[text=black,anchor=base,inner sep=0pt, outer sep=0pt, scale=0.9] at (231.08, 39.60) {$1\times 10^{4}$};
\node[text=black,anchor=base,inner sep=0pt, outer sep=0pt, scale=0.9] at (281.02, 39.60) {$2\times 10^{4}$};
\node[text=black,anchor=base,inner sep=0pt, outer sep=0pt, scale=0.9] at (347.04, 39.60) {$5\times 10^{4}$};
\node[text=black,anchor=base,inner sep=0pt, outer sep=0pt, scale=0.9] at (396.98, 39.60) {$1\times 10^{5}$};
\node[text=black,anchor=base,inner sep=0pt, outer sep=0pt, scale=0.9] at (446.92, 39.60) {$2\times 10^{5}$};
\path[draw=black,line width= 0.4pt,line join=round,line cap=round] ( 49.20, 72.13) -- ( 49.20,446.94);
\path[draw=black,line width= 0.4pt,line join=round,line cap=round] ( 49.20, 72.13) -- ( 43.20, 72.13);
\path[draw=black,line width= 0.4pt,line join=round,line cap=round] ( 49.20,100.34) -- ( 43.20,100.34);
\path[draw=black,line width= 0.4pt,line join=round,line cap=round] ( 49.20,165.83) -- ( 43.20,165.83);
\path[draw=black,line width= 0.4pt,line join=round,line cap=round] ( 49.20,194.04) -- ( 43.20,194.04);
\path[draw=black,line width= 0.4pt,line join=round,line cap=round] ( 49.20,259.54) -- ( 43.20,259.54);
\path[draw=black,line width= 0.4pt,line join=round,line cap=round] ( 49.20,287.74) -- ( 43.20,287.74);
\path[draw=black,line width= 0.4pt,line join=round,line cap=round] ( 49.20,353.24) -- ( 43.20,353.24);
\path[draw=black,line width= 0.4pt,line join=round,line cap=round] ( 49.20,381.44) -- ( 43.20,381.44);
\path[draw=black,line width= 0.4pt,line join=round,line cap=round] ( 49.20,446.94) -- ( 43.20,446.94);
\node[text=black,rotate= 90.00,anchor=base,inner sep=0pt, outer sep=0pt, scale=  1.00] at ( 34.80, 72.13) {5};
\node[text=black,rotate= 90.00,anchor=base,inner sep=0pt, outer sep=0pt, scale=  1.00] at ( 34.80,100.34) {10};
\node[text=black,rotate= 90.00,anchor=base,inner sep=0pt, outer sep=0pt, scale=  1.00] at ( 34.80,165.83) {50};
\node[text=black,rotate= 90.00,anchor=base,inner sep=0pt, outer sep=0pt, scale=  1.00] at ( 34.80,194.04) {100};
\node[text=black,rotate= 90.00,anchor=base,inner sep=0pt, outer sep=0pt, scale=  1.00] at ( 34.80,259.54) {500};
\node[text=black,rotate= 90.00,anchor=base,inner sep=0pt, outer sep=0pt, scale=  1.00] at ( 34.80,287.74) {1000};
\node[text=black,rotate= 90.00,anchor=base,inner sep=0pt, outer sep=0pt, scale=  1.00] at ( 34.80,353.24) {5000};
\node[text=black,rotate= 90.00,anchor=base,inner sep=0pt, outer sep=0pt, scale=  1.00] at ( 34.80,446.94) {50000};
\path[draw=black,line width= 0.4pt,line join=round,line cap=round] ( 49.20, 61.20) --
	(480.69, 61.20) --
	(480.69,456.69) --
	( 49.20,456.69) --
	( 49.20, 61.20);
\end{scope}
\begin{scope}
\path[clip] (  0.00,  0.00) rectangle (505.89,505.89);

\node[text=black,anchor=base,inner sep=0pt, outer sep=0pt, scale=  1.20] at (264.94,477.15) {\bfseries loglog Runtime Plot};
\node[text=black,anchor=base,inner sep=0pt, outer sep=0pt, scale=  1.00] at (264.94, 15.60) {n};
\node[text=black,rotate= 90.00,anchor=base,inner sep=0pt, outer sep=0pt, scale=  1.00] at ( 10.80,258.94) {time [sec]};
\end{scope}
\begin{scope}
\path[clip] ( 49.20, 61.20) rectangle (480.69,456.69);

\path[draw=red,line width= 0.4pt,line join=round,line cap=round] ( 65.18, 75.85) -- (165.06,180.22) -- (264.94,301.90) -- (364.83,442.04);



\path[draw=black,line width= 0.4pt,dash pattern=on 4pt off 4pt ,line join=round,line cap=round] ( 49.20,201.46) -- (480.69,201.46);
\path[draw=black,line width= 0.4pt,dash pattern=on 4pt off 4pt ,line join=round,line cap=round] ( 49.20,238.75) -- (480.69,238.75);
\path[draw=black,line width= 0.4pt,dash pattern=on 4pt off 4pt ,line join=round,line cap=round] ( 49.20,339.87) -- (480.69,339.87);
\path[draw=black,line width= 0.4pt,dash pattern=on 4pt off 4pt ,line join=round,line cap=round] ( 49.20,440.99) -- (480.69,440.99);
\node[text=black,anchor=base,inner sep=0pt, outer sep=0pt, scale=1.00] at (115.12,207.46) {$2$ min};
\node[text=black,anchor=base,inner sep=0pt, outer sep=0pt, scale=1.00] at (115.12,244.75) {$5$ min};
\node[text=black,anchor=base,inner sep=0pt, outer sep=0pt, scale=1.00] at (115.12,345.87) {$1$ h};
\node[text=black,anchor=north,inner sep=0pt, outer sep=0pt, scale=1.00] at (115.12,436.0) {$12$ h};
\path[draw=black,line width= 0.4pt,line join=round,line cap=round] (400.34, 97.20) rectangle (480.69, 61.20);

\path[draw=red,line width= 0.4pt,line join=round,line cap=round] (409.34, 85.20) -- (427.34, 85.20);

\path[draw=black,line width= 0.4pt,line join=round,line cap=round] (409.34, 73.20) -- (427.34, 73.20);
\node[text=black,anchor=base west,inner sep=0pt, outer sep=0pt, scale=  1.00] at (436.34, 81.76) {$\mave$};
\node[text=black,anchor=base west,inner sep=0pt, outer sep=0pt, scale=  1.00] at (436.34, 70.51) {$\nn_{512}$};
\end{scope}
\end{tikzpicture}
\end{figure}
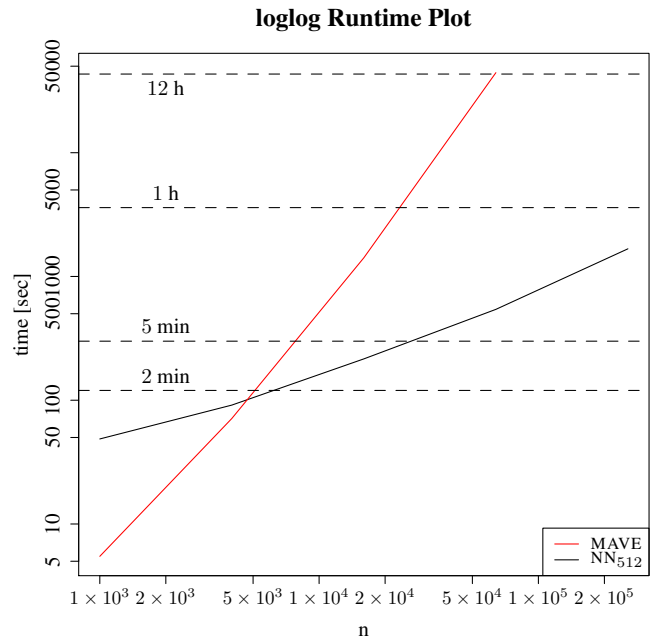

\section{Data Analysis}\label{sec:DataAnalysis}

We analyze three data sets. The first in Section~\ref{sec:BostonHousing} is of relatively small sample size ($n=506$) and number of predictors ($p=12$), the second in Section \ref{sec:KC} is of large $n$ ($21,613$) and small $p$ ($16$), and the third in Section \ref{sec:Beijing} is of very large $n$ ($382,168$) and small to medium $p=40$.

\subsection{Boston Housing}\label{sec:BostonHousing}
In this section we apply 
the refined $\nn$ estimator on the \texttt{Boston Housing} data and compare its performance with the other two mean subspace SDR methods, $\mave$ and $\cve$. 
This data set has been extensively used as a benchmark for assessing regression methods [see, for example, \cite{James2013}], and is available in the \texttt{R}-package \texttt{mlbench}. 
The data comprise of 506 instances of 14 variables from the  1970 Boston census, 13 of which are continuous. The binary variable \texttt{chas}, indexing proximity to the Charles river, is omitted from the analysis since all three methods operate under the assumption of continuous predictors. The target variable is the median value of owner-occupied homes, \texttt{medv}, in $\$1,000$. The 12 predictors are \texttt{crim} (per capita crime rate by town), \texttt{zn} (proportion of residential land zoned for lots over 25,000 sq.ft), \texttt{indus} 	(proportion of non-retail business acres per town),
\texttt{nox} (nitric oxides concentration (parts per 10 million)), \texttt{rm} (average number of rooms per dwelling), \texttt{age} (proportion of owner-occupied units built prior to 1940), \texttt{dis} (weighted distances to five Boston employment centres), \texttt{rad} (index of accessibility to radial highways), \texttt{tax} (full-value property-tax rate per $\$10,000$), \texttt{ptratio} (pupil-teacher ratio by town),  \texttt{lstat} (percentage of lower status of the population), and \texttt{b} stands for $1000(B - 0.63)^2$ where $B$ is the proportion of blacks by town.

We set the dimension of the reduction $\B$ to two; i.e., $k = 2$, for all three methods and  compute prediction errors using squared error loss and leave-one-out cross validation. The $\nn$ with one layer and 512 neurons is fitted on the $n-1$  training data and compute the predicted value for the left out data point. Both $\cve$ and $\mave$ were applied to the standardized  training data. 
The mean and standard deviation (in parentheses) of the 506 prediction errors are displayed in Table~\ref{tab:Boston}.  The $\cve$ method results in the smallest prediction error followed by $\nn-\sdr$, which, on the other hand, has the smallest standard error. $\mave$ is the least accurate.  The analysis for $k = 1$ yielded similar results.
In this example of small $n$-small $p$, nonparametric methods are expected to do well, which is what we observe for $\cve$ followed by $\mave$. Nevertheless, the performance of the large sample $\nn-\sdr$ method is roughly on par with both. 

\begin{table}
    \caption{\label{tab:Boston}Leave-One-Out Cross Validation Prediction errors mean and standard deviation (in brackets) with reduction dimension $k = 2$.}
    \centering
    \begin{tabular}{r | ccc}
        \hline\noalign{\smallskip}
               &  $\mave$ &  $\cve$  &$\nn_{512}$ \\
        \noalign{\smallskip}\hline\noalign{\smallskip}
          mean &  18.762  &  16.148  &  18.006  \\ 
          (sd) & (63.136) & (63.500) & (41.739) \\
        \noalign{\smallskip}\hline
    \end{tabular}
\end{table}

\subsection{KC Housing}\label{sec:KC}
Further, we use \texttt{kc\_house\_data} set in the \texttt{R} package $\mave$ to compare $\nn-\sdr$ estimation with  $\mave$. 
The data set contains $21613$ observations on $20$ variables. The target variable is \texttt{price}, the price of a sold house.
We use 16 predictors after omitting \texttt{id, date}, and \texttt{zip code}: 
\texttt{bedrooms} (number of bedrooms),  \texttt{bathrooms} (number of bathrooms), \texttt{sqft\_living} (square footage of the living room(, \texttt{sqrt\_log} (square footage of the log),  \texttt{floors} (total floors in the house), \texttt{waterfront} (whether the house has a view a waterfront(1: yes, 0: not)), \texttt{view} (unknown), \texttt{condtion} (condition of the house), \texttt{grade} (unknown),  \texttt{sqft\_above} (square footage of house apart from basement), \texttt{sqft\_basement} (square footage of the basement),   \texttt{yr\_built} (built year), \texttt{yr\_renovated} (year when the house was renovated),   
\texttt{lat} (latitude coordinate), \texttt{long} (longitude coordinate), \texttt{sqft\_living15} (living room area in 2015(implies some renovations)), \texttt{sqrt\_lot15} (lot area in 2015(implies some renovations)).

We perform 10-fold cross-validation in order to obtain an unbiased estimate of the out of sample prediction error. We set $k = 1$ and report the average fraction of the mean squared prediction error divided by the variance of the response on the test set, as well as its standard error, in Table~\ref{tab:KC}. Our $\nn-\sdr$ estimator has  out of sample mean squared error that is about half the variance of the response on the test set, whereas $\mave$'s is less than 2 percent lower than the variance of the response. This means that the $\nn-\sdr$ regression explains roughly half of the total variance in the response whereas $\mave$ hardly explains any. Further, even though the $\mave$ reduction is estimated in roughly the same time as $\nn-\sdr$, in $6$ out of the $10$ folds the \texttt{predict} function for $\mave$ produces an error. We also report the 10-fold cross-validated prediction error for $\cve$, which yields  the best result as it explains more than  70\% of the total variance in the response but could not be computed, in its current implementation, on a personal computer. \footnote{The $\cve$ values were computed on the \emph{Vienna Scientific Cluster} (VSC).}

The coefficients of the reductions are given in Table \ref{tab:coefs}. $\nn-\sdr$ extracts information from all variables as it places non-zero weights of varying size on all. $\mave$, on the other hand, selects \texttt{waterfront} and the co-linear \texttt{sqft\_living, sqft\_above, sqft\_basement} (\texttt{sqft\_living = sqft\_above + sqft\_basement}) and drops all other variables. Moreover, it allocates the same weight to the collinear variables with opposite signs, effectively discounting all three and ultimately declaring only \texttt{waterfront} relevant.

These results indicate that $\mave$ breaks down in the analysis of this data set. Since \texttt{sqft\_living} = \texttt{sqft\_above} + \texttt{sqft\_basement}, we dropped \texttt{sqft\_basement} to investigate the effect of collinearity. In Figure \ref{fig:kc_house_data_reduced}, we plot the response versus the $\mave$ reduction computed on all predictors in the left panel, versus the $\mave$ reduction without \texttt{sqft\_basement} and versus the $\nn-\sdr$ reduction in the right panel. The reduced predictors are strikingly different. The $\nn-\sdr$  reduction is smooth and captures a clear nonlinear heteroskedastic relationship with \texttt{price}.  
The plot in the left panel captures the failure of $\mave$ to extract the predictive information in the predictors, as no apparent pattern emerges. Moreover, the data are arbitrarily split in the groups defined by the binary \texttt{waterfront} variable. Once the collinearity is removed, $\mave$ captures the relationship between $Y$ and $\X$ but nevertheless it again splits the data into two new arbitrary classes for the renovated and non-renovated houses. This variable takes either value 0 (not renovated) or the renovation year that ranges between 1934 and 2015. The black points in the middle panel correspond to 0 and red to the period 1934-2015.  

We further draw attention to the semblance of the data clouds across the two categories in the middle panel and the $\nn-\sdr$ reduction in the right panel. Both $\mave$ and $\nn-\sdr$ discover the same pattern, with the correlation coefficients of $\mave$ and $\nn-\sdr$ reductions being 0.82 and 0.85, albeit $\mave$ introduces an artificial split in the data. 

In Table \ref{tab:coefs}, we also provide the coefficients of the last two eigenvectors, corresponding to the two smallest eigenvalues in decreasing order, of the sample covariance matrix of the predictors. The next to last places most of the weight on \texttt{waterfront} and the last on \texttt{sqft\_living} and \texttt{sqft\_above, sqft\_basement}. Moreover, the vector of coefficients of the $\mave$ reduction based on all predictors in the first column seems to be the sum of the last and the down-weighted second to last eigenvectors of the sample covariance matrix of $\X$. This relates to the fact that the sample covariance matrix of $\X$ is singular of rank $16=p-1$. Thus, the last eigenvector dominates all others and largely agrees with the $\mave$ reduction coefficients. We investigate the effect of collinearity on $\mave$ and $\cve$ in Section \ref{sec:singular}. 

\begin{table}
    \caption{\label{tab:KC} Ten-fold Cross Validation Relative Prediction errors mean and standard deviation (in brackets) with reduction dimension $k = 1$.}
    \centering
    \begin{tabular}{r| ccc}
        \hline\noalign{\smallskip}
      &        $\mave$        &         $\cve$        &      $\nn_{512}$      \\
      \noalign{\smallskip}\hline\noalign{\smallskip}
  mean &  0.982  &  0.296  &  0.527  \\
  (sd) & (0.035) & (0.149) & (0.043) \\
\noalign{\smallskip}\hline
    \end{tabular}
\end{table}

\begin{figure}
    \centering
    \caption{\label{fig:kc_house_data_reduced}Reduced data versus response of the \texttt{kc\_house\_data} for $\mave$ and $\nn_{512}$.}
    \includegraphics[width=\columnwidth]{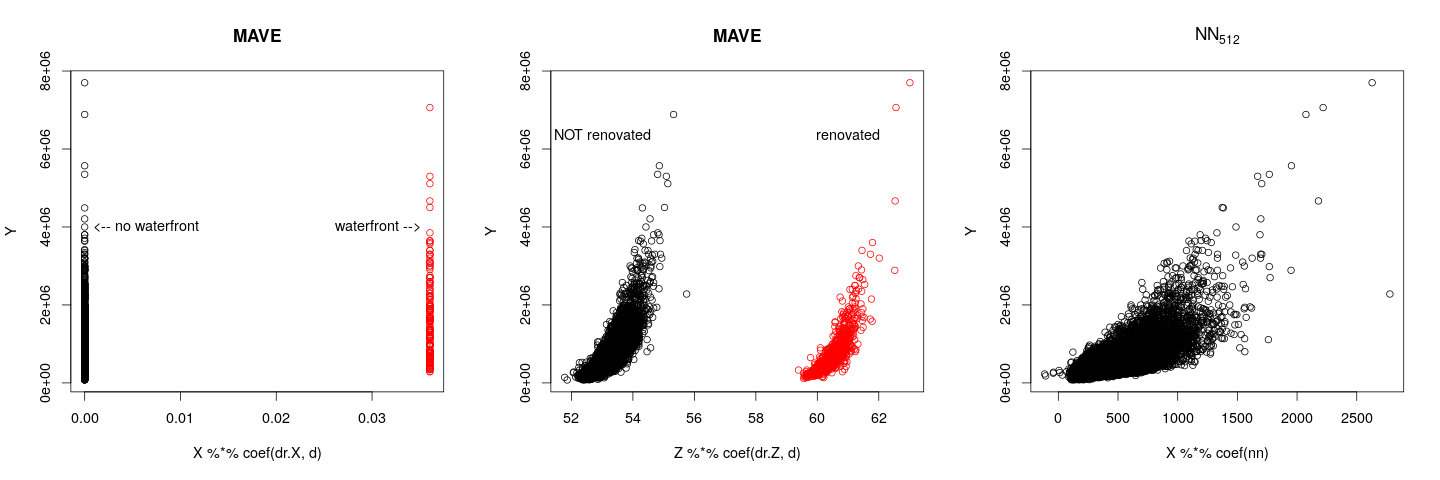}
\end{figure}

\begin{table}[!ht]
    \caption{\label{tab:coefs}Estimated linear reductions for the \texttt{kc\_house\_data} data set. $\Z$ are the remaining $16$ predictors when dropping \texttt{sqft\_basement} from $\X$. The last column $v_{17}(\X), v_{16}(\X)$ are the PCA coefficients corresponding to the last two smallest eigenvalue of $\var(\X)$ for un-scaled $\X$.}
    \centering
    \begin{tabular}{l| ccccc}
        \hline\noalign{\smallskip}
              &$\hat{\B}_{\mave(\X)}$&$\hat{\B}_{\mave(\Z)}$&$\hat{\B}_{\nn_{512}(\X)}$&$v_{17}(\X)$&$v_{16}(\X)$ \\
        \noalign{\smallskip}\hline\noalign{\smallskip}
bedrooms      &  0.000 & -0.015 & -0.171 &  0.000 & -0.005  \\
bathrooms     &  0.000 &  0.050 &  0.057 &  0.000 & -0.001  \\
sqft\_living  &  0.577 &  0.000 &  0.099 &  0.577 &  0.000  \\
sqft\_lot     &  0.000 &  0.000 &  0.000 &  0.000 &  0.000  \\
floors        &  0.000 &  0.036 &  0.098 &  0.000 &  0.000  \\
waterfront    &  0.036 &  0.330 &  0.485 &  0.000 & -0.999  \\
view          &  0.000 &  0.047 &  0.787 &  0.000 &  0.046  \\
condition     &  0.000 &  0.042 &  0.152 &  0.000 &  0.002  \\
grade         &  0.000 &  0.124 &  0.204 &  0.000 & -0.003  \\
sqft\_above   & -0.577 &  0.000 &  0.080 & -0.577 &  0.000  \\
sqft\_basement& -0.577 &        &  0.112 & -0.577 &  0.000  \\
yr\_built     &  0.000 & -0.003 & -0.031 &  0.000 &  0.000  \\
yr\_renovated &  0.000 &  0.004 &  0.051 &  0.000 &  0.000  \\
lat           &  0.000 &  0.925 & -0.001 &  0.000 & -0.011  \\
long          &  0.000 & -0.107 &  0.007 &  0.000 & -0.016  \\
sqft\_living15&  0.000 &  0.000 &  0.079 &  0.000 &  0.000  \\
sqft\_lot15   &  0.000 &  0.000 & -0.001 &  0.000 &  0.000  \\
        \noalign{\smallskip}\hline
    \end{tabular}
\end{table}

\subsubsection{The case of singular \texorpdfstring{$\Sigmaxbf$}{Sigma}}\label{sec:singular}
We consider the effect of collinear predictors on the sufficient dimension reduction
techniques $\mave$, $\cve$, and $\nn-\sdr$. We assume that $\Sigmaxbf = \var(\X)$ is singular
and show that, in this case, the mean subspace is not uniquely identifiable.

Let $\Ub$ be a basis of the nullspace of $\Sigmaxbf$, consisting of the eigenvectors
that correspond to the 0 eigenvalue. Without loss of generality, we assume the
eigenspace of $\Ub$ to be one dimensional. Then $\t{\Ub}\X = c$ is constant and we can write 
\begin{multline}\label{Bnotunique}
    Y = g(\t{\B}\X + c - c) + \epsilon
      = g_c(\t{(\B+ \Ub)}\X) + \epsilon \\
      = g_c(\t{\tilde{\B}}\X) + \epsilon
\end{multline}
where $g_c(\xn) = g(\xn - c)$ fulfills all assumptions of the link function  in model~\eqref{mod:mean_basic} and $\tilde{\B} = \B + \Ub$. 
If $\Span\{\Ub\} \subset \Span\{\B\}$, then $\Span\{\B\} = \Span\{\tilde{\B}\}$ and the mean subspace is unique. Otherwise, both $\Span\{\B\}$ and $\Span\{\tilde{\B}\}$ are dimension reduction subspaces but $\Span\{\tilde{\B}\} \neq \Span\{\B\}$.

Most $\sdr$ approaches, including $\mave$ \cite[Cond. 3(a), p. 386]{Xiaetal2002} and $\cve$ \cite[Cond. A.1 , p. 3 ]{FertlBura2021a} require $\X$ have a density; that is, its variance-covariance is positive definite. It appears that $\mave$ is more sensitive to the violation of this assumption as compared to $\cve$.


To demonstrate this we present a small simulation study. Let $\X = \t{(X_1, \ldots,X_p)}$ with $(X_2,\ldots,X_p) \sim \normal(\0,\I_{p-1})$ and $X_1 = -0.5(X_2 +X_3) + 0.001 Z$, where $Z \sim \normal(0,1)$ and is independent of $(X_2,\ldots,X_p)$. Then, $\Ub = (2,1,1,0,\ldots,0)/\sqrt{6} \approx (0.816, 0.408, 0.408, 0,\ldots,0)$ is the eigenvector of $\Sigmaxbf$ corresponding to the smallest eigenvalue.

Let $p = 10$ and $Y = (\t{\B}\X)^2 + 0.5\epsilon $, where $\epsilon \sim  \normal (0,1)$ is independent from $\X$ and $\B = \eb_4$ the fourth standard basis vector. We draw $100$ random samples $\t{(Y_i, \t{\X}_i)}_{i=1,...,n}$ of size $n = 100$ from this model and calculate the $\mave$, $\cve$ and $\nn_{512}$ estimators of $\B$. The median, mean and  standard deviation of the estimation errors for the subspace in \eqref{err} are reported in Table~\ref{tab:singular}.

\begin{table}[!ht]
    \centering
    \caption{\label{tab:singular}Ten-fold Cross Validation Relative Prediction errors mean and standard deviation (in brackets) with strong collinearity in the predictors.}
    \begin{tabular}{r| ccc}
        \hline\noalign{\smallskip}
             & $\mave$ & $\cve$ & $\nn_{512}$  \\
        \noalign{\smallskip}\hline\noalign{\smallskip}
            mean   & 0.917 & 0.164 &  0.101 \\
            median & 0.999 & 0.162 &  0.096 \\
            (sd)   &(0.256)&(0.057)& (0.032)\\
        \noalign{\smallskip}\hline
    \end{tabular}
\end{table}

For example, one  of the $\B_{\mave}$ estimates is $(0.815$, $0.404$, $0.402$, $-0.102$, $0$,$0$, $-0.001$, $-0.006$, $0.003$, $-0.008)$ with associated error $0.995$. We  can clearly see that $\mave$ estimates $\Ub$ instead of $\B$, and  most $\mave$ estimates follow the same pattern. On the other hand, one  of the $\B_{\cve}$ estimates is $(0.013$, $-0.007$, $-0.018$, $-0.99$, $0.015$, $-0.004$, $-0.046$, $-0.13$, $0.015$, $-0.021)$, with associated error $0.143$ and one  of the $\B_{\nn_{512}}$ estimates is $(-0.007$, $-0.002$, $-0.074$, $-0.995$, $0.013$, $0.004$, $-0.035$, $-0.047$, $0.026$, $-0.031)$, with associated error $0.103$. $\cve$ and $\nn-\sdr$ stays clear of $\Ub$ and correctly identifies the true $\B$.

In this example, in particular, $\mave$ seems to focus solely on estimating $\Ub$ instead of $\B$. This does not hold in general. We offer an explanation by setting $c = \alpha c$ in \eqref{Bnotunique} for a scalar $\alpha$. Following the rationale below \eqref{Bnotunique}, $\tilde{\B} = \B + \alpha \Ub$ is a reduction for any $\alpha$. Since $\mave$, $\cve$ and $\nn-\sdr$ work with $\tilde{\B} \in \spc(p,k)$, 
$\alpha$ determines the weight placed on $\Ub$ relative to $\B$. For large $\alpha$, $\Ub$ dominates the reduction $\tilde{\B}$ and $\mave$ fails to identify the mean subspace.
 In contrast, $\cve$ and $\nn-\sdr$ remains robust in its ability to accurately estimate the mean subspace. 

We conjecture that $\mave$'s vulnerability is numerical in nature and relates to the implementation algorithm in the $\mave$ package.
We also conjecture that $\cve$ and $\nn-\sdr$ are more robust than $\mave$. 

\subsection{Beijing Air Quality Data}\label{sec:Beijing}
The \emph{Beijing Multi-Site Air-Quality Data} \cite{zhang_beijing_air_2017} available at the UCI machine learning repository\footnote{\url{https://archive.ics.uci.edu/ml/datasets/Beijing+Multi-Site+Air-Quality+Data}} includes hourly air pollutants data from 12 nationally-controlled air-quality monitoring sites in Beijing. The air-quality data are from the Beijing Municipal Environmental Monitoring Center. The meteorological data in each air-quality site are matched with the nearest weather station from the China Meteorological Administration. After removing missing data entries, the data contains $382168$ complete measurements.

The target $PM_{2.5} [ug/m^3]$ is the concentration of particle matter in the air with less than $2.5$ micrometres in diameter.

The predictors are \texttt{year}, \texttt{month}, \texttt{day}, \texttt{hour}, \texttt{SO2} (SO2 concentration), \texttt{NO2} (NO2 concentration), \texttt{CO} (CO concentration), \texttt{O3} (O3 concentration), \texttt{TEMP} (temperature), \texttt{PRES} (pressure), \texttt{DEWP} (dew point temperature), \texttt{RAIN} (precipitation), \texttt{wd} (wind direction), \texttt{WSPM} (wind speed), \texttt{station} (name of the air-quality monitoring site). The two categorical variables \texttt{wd} and \texttt{station}, with 16 and 12 categories, respectively, are converted to 26 dummy variables, 
resulting in $40$ predictors.

We included the categorical variables to demonstrate that $\nn-\sdr$ can handle dummy variables even though it is not designed for this. Given the large sample size we used $2$ epochs for the first stage and $3$ for the second refinement stage of the training. Due to the large sample size $\mave$ and $\cve$ are infeasible to compute while $\nn$-SDR executes in less than 3 minutes per fold run on the CPU of personal computer.
As a comparison, we included the \emph{linear model} (\texttt{lm}) as well as the \emph{Multivariate Adaptive Regression Splines} (\texttt{mars}, \cite{mars,mdaRpackage}), as both can be applied to large regressions, provided $p < n$ and are computationally efficient. 

In Table~\ref{tab:Beijing_Air}, the mean of the 10-fold cross validation prediction errors is reported. 
The linear model exhibits the worst performance, as expected. $\nn-\sdr$ improves upon the linear model for all choices of dimension we examined, beats \texttt{mars} for $k=3,4$ and obtains the minimum MSPE for $k=4$. Thus, not only is $\nn-\sdr$ the best method with respect to predictive accuracy, but it also provides an assessment of the true structural dimension of the relationship ($k=4)$ between the response and the predictors. This confirms the improved performance of \texttt{mars}, a multivariate nonparametric fitting method, over the linear model and points to the nonlinearity of the relationship. 

\begin{table}[!ht]
    \caption{\label{tab:Beijing_Air}10-Fold Cross Validation Mean Squared Prediction errors.}
    \centering
    \begin{tabular}{r | cccccc}
        \hline\noalign{\smallskip}
      &  \texttt{lm}  &  \texttt{mars}  &$\nn_{512}$&$\nn_{512}$&$\nn_{512}$&$\nn_{512}$ \\
      &         &        &$k = 1$ &$k = 2$ &$k = 3$ &$k = 4$\\
      \noalign{\smallskip}\hline\noalign{\smallskip}
 mean &   1829  &  1628  &  1746  &  1654  &  1604  &  1526  \\
 (sd) &  (20.8) & (24.9) & (19.9) & (18.8) & (24.0) & (59.3) \\
        \noalign{\smallskip}\hline
    \end{tabular}
\end{table}

\section{Discussion}\label{sec:discussion}
We introduced the novel $\nn-\sdr$ estimator for the mean subspace.
$\nn-\sdr$ combines a sufficient dimension reduction approach with neural nets to
first reduce the predictor vector and then estimate its functional relationship
with the response and predict it. The estimator is shown to be competitive with
state-of-the-art $\sdr$ approaches, such as $\mave$ and $\cve$, in simulations
and data applications. Moreover, it is the only one among them that is computationally
feasible for big data, where both $p$ and $n$ are large, even on personal computers. 

In view of our simulation results, the $\nn-\sdr$ estimator appears to be consistent.
Nevertheless, we could not resolve the theoretical challenges involving neural
nets to formally prove consistency, as this would require showing the consistency
of neural net estimates, which  remains an open problem. 

A particularly attractive feature of $\nn-\sdr$, in contrast to $\mave$ and $\cve$,
is that it is naturally configured for online training if new data become available
due to the stochastic gradient descent algorithm described in Section~\ref{sec:algo}.
Specifically, the algorithm effortlessly updates the parameters of $\nn-\sdr$
with further gradient steps using the new data.

\section*{Acknowledgements}
The authors gratefully acknowledge the support of the Austrian Science Fund (FWF P 30690-N35).
The computations for $\cve$ in Table \ref{tab:KC} were carried out using the Vienna Scientific Cluster (VSC).


\begin{thebibliography}{10}
\bibitem{tensorflow2015-whitepaper}
Mart\'{\i}n Abadi, Ashish Agarwal, Paul Barham, Eugene Brevdo, Zhifeng Chen,
    Craig Citro, Greg~S. Corrado, Andy Davis, Jeffrey Dean, Matthieu Devin,
    Sanjay Ghemawat, Ian Goodfellow, Andrew Harp, Geoffrey Irving, Michael Isard,
    Yangqing Jia, Rafal Jozefowicz, Lukasz Kaiser, Manjunath Kudlur, Josh
    Levenberg, Dandelion Man\'{e}, Rajat Monga, Sherry Moore, Derek Murray, Chris
    Olah, Mike Schuster, Jonathon Shlens, Benoit Steiner, Ilya Sutskever, Kunal
    Talwar, Paul Tucker, Vincent Vanhoucke, Vijay Vasudevan, Fernanda Vi\'{e}gas,
    Oriol Vinyals, Pete Warden, Martin Wattenberg, Martin Wicke, Yuan Yu, and
    Xiaoqiang Zheng.
\newblock {TensorFlow}: Large-scale machine learning on heterogeneous systems,
    2015.
\newblock Software available from tensorflow.org.

\bibitem{tensorflow}
JJ~Allaire and Yuan Tang.
\newblock {\em tensorflow: R Interface to 'TensorFlow'}, 2020.
\newblock {R} package version 2.2.0.

\bibitem{Bottou1998SGD}
L\'{e}on Bottou.
\newblock Online algorithms and stochastic approximations.
\newblock In David Saad, editor, {\em Online Learning and Neural Networks}.
    Cambridge University Press, Cambridge, UK, 1998.
\newblock \url{http://leon.bottou.org/papers/bottou-98x} revised, Oct 2012.

\bibitem{BuraCook2001}
Efstathia Bura and R.~Dennis Cook.
\newblock Estimating the structural dimension of regressions via parametric
    inverse regression.
\newblock {\em Journal of the Royal Statistical Society. Series B: Statistical
    Methodology}, 63(2):393--410, 2001.

\bibitem{BuraDuarteForzani2016}
Efstathia Bura, Sabrina Duarte, and Liliana Forzani.
\newblock Sufficient reductions in regressions with exponential family inverse
    predictors.
\newblock {\em Journal of the American Statistical Association},
    111(515):1313--1329, 2016.
\newblock \url{https://doi.org/10.1080/01621459.2015.1093944}.

\bibitem{BuraForzani2015}
Efstathia Bura and Liliana Forzani.
\newblock Sufficient reductions in regressions with elliptically contoured
    inverse predictors.
\newblock {\em Journal of the American Statistical Association},
    110(509):420--434, 2015.
\newblock \url{https://doi.org/10.1080/01621459.2014.914440}.

\bibitem{Cook1998b}
R.~Dennis Cook.
\newblock Principal hessian directions revisited.
\newblock {\em Journal of the American Statistical Association},
    93(441):84--94, 1998.

\bibitem{Cook2000}
R.~Dennis Cook.
\newblock Save: A method for dimension reduction and graphics in regression.
\newblock {\em Communications in Statistics - Theory and Methods},
    29:2109--2121, 09 2000.
\newblock \url{https://doi.org/10.1080/03610920008832598}.

\bibitem{Cook2007}
R.~Dennis Cook.
\newblock {Fisher Lecture: Dimension Reduction in Regression}.
\newblock {\em Statistical Science}, 22(1):1--26, 02 2007.
\newblock \url{https://doi.org/10.1214/088342306000000682}.

\bibitem{CookForzani2008}
R.~Dennis Cook and L.~Forzani.
\newblock Principal fitted components for dimension reduction in regression.
\newblock {\em Statistical Science}, 23(4):485--501, 2008.

\bibitem{CookForzani2009}
R.~Dennis Cook and Liliana Forzani.
\newblock Likelihood-based sufficient dimension reduction.
\newblock {\em Journal of the American Statistical Association},
    104(485):197--208, 3 2009.
\newblock \url{https://doi.org/10.1198/jasa.2009.0106}.

\bibitem{CookLi2002}
R.~Dennis Cook and Bing Li.
\newblock {Dimension reduction for conditional mean in regression}.
\newblock {\em The Annals of Statistics}, 30(2):455 -- 474, 2002.
\newblock \url{https://doi.org/10.1214/aos/1021379861}.

\bibitem{CookLi2004}
R.~Dennis Cook and Bing Li.
\newblock {Determining the dimension of iterative Hessian transformation}.
\newblock {\em The Annals of Statistics}, 32(6):2501 -- 2531, 2004.
\newblock \url{https://doi.org/10.1214/009053604000000661}.

\bibitem{FertlBura2021a}
Lukas Fertl and Efstathia Bura.
\newblock Conditional {Variance} {Estimator} for {Sufficient} {Dimension}
    {Reduction}.
\newblock arXiv:2102.08782 [math, stat], 2021.
\newblock \url{http://arxiv.org/abs/2102.08782}.

\bibitem{FertlBura2021b}
Lukas Fertl and Efstathia Bura.
\newblock Ensemble {Conditional} {Variance} {Estimator} for {Sufficient}
    {Dimension} {Reduction}.
\newblock arXiv:2102.13435 [stat], 2021.
\newblock \url{https://arxiv.org/abs/2102.13435}.

\bibitem{friedberg_local_2020}
Rina Friedberg, Julie Tibshirani, Susan Athey, and Stefan Wager.
\newblock Local {L}inear {F}orests.
\newblock arXiv:1807.11408 [cs, econ, math, stat], 2020.
\newblock \url{http://arxiv.org/abs/1807.11408}.

\bibitem{mars}
Jerome~H. Friedman.
\newblock Multivariate adaptive regression splines.
\newblock {\em The Annals of Statistics}, 19(1):1--67, 1991.

\bibitem{Goodfellow-et-al-2016}
Ian Goodfellow, Yoshua Bengio, and Aaron Courville.
\newblock {\em Deep Learning}.
\newblock MIT Press, 2016.
\newblock \url{http://www.deeplearningbook.org}.

\bibitem{NN1}
Kevin Gurney.
\newblock {\em An Introduction to Neural Networks}.
\newblock Taylor \& Francis, Inc., USA, 1997.

\bibitem{mdaRpackage}
Trevor Hastie and Tibshirani Robert.
\newblock {\em mda: Mixture and Flexible Discriminant Analysis}, 2017.
\newblock {S} original by {Trevor} {Hastie} \& {Robert} {Tibshirani}.
    {Original} {R} port by {Friedrich} {Leisch} and {Kurt} {Hornik} and {Brian D.
    Ripley}. {R} package version 0.4-10.

\bibitem{HORNIK1991251}
Kurt Hornik.
\newblock Approximation capabilities of multilayer feedforward networks.
\newblock {\em Neural Networks}, 4(2):251--257, 1991.
\newblock \url{https://doi.org/10.1016/0893-6080(91)90009-T}.

\bibitem{James2013}
Gareth James, Daniela Witten, Trevor Hastie, and Robert Tibshirani.
\newblock {\em An {Introduction} to {Statistical} {Learning}: with
    {Applications} in {R}}.
\newblock Number 103 in Springer texts in statistics. Springer, New York, 2013.

\bibitem{ISL}
Gareth James, Daniela Witten, Trevor Hastie, and Robert Tibshirani.
\newblock {\em An Introduction to Statistical Learning: With Applications in
    R}.
\newblock Springer Publishing Company, Incorporated, 2014.

\bibitem{CVarE}
Daniel Kapla and Lukas Fertl.
\newblock {\em {CVarE}: Conditional Variance Estimator for Sufficient Dimension
    Reduction}, 2021.
\newblock {R} package version 1.1.

\bibitem{NNAutoencoder}
Diederik~P. Kingma and Max Welling.
\newblock An {I}ntroduction to {V}ariational {A}utoencoders.
\newblock arXiv:1906.02691 [cs.LG], 2019.
\newblock \url{http://arxiv.org/abs/1906.02691}.

\bibitem{NNAutoencoder2}
Mark~A. Kramer.
\newblock Nonlinear principal component analysis using autoassociative neural
    networks.
\newblock {\em AIChE Journal}, 37(2):233--243, 1991.
\newblock \url{https://doi.org/10.1002/aic.690370209}.

\bibitem{Li2018}
Bing Li.
\newblock {\em Sufficient dimension reduction: methods and applications with
    R}.
\newblock CRC Press, Taylor \& Francis Group, 2018.

\bibitem{directionalRegression}
Bing Li and Shaoli Wang.
\newblock On directional regression for dimension reduction.
\newblock {\em Journal of the American Statistical Association},
    102(479):997--1008, 2007.
\newblock \url{https://doi.org/10.1198/016214507000000536}.

\bibitem{contourRegression}
Bing Li, Hongyuan Zha, and Francesca Chiaromonte.
\newblock {Contour regression: A general approach to dimension reduction}.
\newblock {\em The Annals of Statistics}, 33(4):1580--1616, 2005.
\newblock \url{https://doi.org/10.1214/009053605000000192}.

\bibitem{Li1991}
Ker-Chau Li.
\newblock Sliced inverse regression for dimension reduction.
\newblock {\em Journal of the American Statistical Association},
    86(414):316--327, 1991.

\bibitem{Li1992}
Ker-Chau Li.
\newblock On principal hessian directions for data visualization and dimension
    reduction: Another application of stein's lemma.
\newblock {\em Journal of the American Statistical Association},
    87(420):1025--1039, 1992.
\newblock \url{https://doi.org/10.1080/01621459.1992.10476258}.

\bibitem{MaZhu2013}
Yanyuan Ma and Liping Zhu.
\newblock A {R}eview on {D}imension {R}eduction.
\newblock {\em International Statistical Review}, 81(1):134--150, 4 2013.
\newblock \url{https://doi.org/10.1111/j.1751-5823.2012.00182.x}.

\bibitem{NN2}
Warren~S McCulloch and Walter Pitts.
\newblock A logical calculus of the ideas immanent in nervous activity.
\newblock {\em The bulletin of mathematical biophysics}, 5(4):115--133, 1943.

\bibitem{gnorm}
Saralees Nadarajah.
\newblock A generalized normal distribution.
\newblock {\em Journal of Applied Statistics}, 32(7):685--694, 2005.
\newblock \url{https://doi.org/10.1080/02664760500079464}.

\bibitem{dropout}
Nitish Srivastava, Geoffrey Hinton, Alex Krizhevsky, Ilya Sutskever, and Ruslan
    Salakhutdinov.
\newblock Dropout: A simple way to prevent neural networks from overfitting.
\newblock {\em Journal of Machine Learning Research}, 15(56):1929--1958, 2014.

\bibitem{WangXia2008}
Hansheng Wang and Yingcun Xia.
\newblock Sliced regression for dimension reduction.
\newblock {\em Journal of the American Statistical Association},
    103(482):811--821, 2008.
\newblock \url{https://doi.org/10.1198/016214508000000418}.

\bibitem{MAVEpackage}
Hang Weiqiang and Xia Yingcun.
\newblock {\em {MAVE}: {Methods} for {Dimension} {Reduction}}, 2019.
\newblock {R} package version 1.3.10.

\bibitem{hinton2012RMSprop}
Geoffrey~Hinton with Nitish~Srivastava and Kevin Swersky.
\newblock {Neural Networks for Machine Learning Lecture 6a - Overview of
    mini-batch gradient descent}, 2012.

\bibitem{Xia2007}
Yingcun Xia.
\newblock A constructive approach to the estimation of dimension reduction
    directions.
\newblock {\em Ann. Statist.}, 35(6):2654--2690, 12 2007.
\newblock \url{https://doi.org/10.1214/009053607000000352}.

\bibitem{Xiaetal2002}
Yingcun Xia, Howell Tong, W.~K. Li, and Li-Xing Zhu.
\newblock An adaptive estimation of dimension reduction space.
\newblock {\em Journal of the Royal Statistical Society: Series B (Statistical
    Methodology)}, 64(3):363--410, 2002.
\newblock \url{https://doi.org/10.1111/1467-9868.03411}.

\bibitem{Yin2010}
Xiangrong Yin.
\newblock {\em Sufficient Dimension Reduction in Regression}, pages 257--273.
\newblock WORLD SCIENTIFIC / HIGHER EDUCATION PRESS, CHINA, 2010.

\bibitem{zhang_beijing_air_2017}
Shuyi Zhang, Bin Guo, Anlan Dong, Jing He, Ziping Xu, and Song~Xi Chen.
\newblock Cautionary tales on air-quality improvement in {Beijing}.
\newblock {\em Proceedings of the Royal Society A: Mathematical, Physical and
    Engineering Sciences}, 473(2205), 2017.
\newblock \url{https://doi.org/10.1098/rspa.2017.0457}.

\end{thebibliography}
\end{document}